\newtheorem{theorem}{Theorem}
\newtheorem{proposition}{Proposition}
\newtheorem{lemma}{Lemma}
\newtheorem{corollary}{Corollary}
\newtheorem{example}{Example}
\DeclareMathOperator*{\argmin}{arg\,min}
\begin{document}

\title{Robust Multidimentional Chinese Remainder Theorem for Integer Vector Reconstruction}

\author{\mbox{Li Xiao, Haiye Huo, and Xiang-Gen Xia,~\IEEEmembership{Fellow,~IEEE}}

\thanks{This work was supported in part by the National Natural Science Foundation of China under Grants 62202442, 12261059, and 11801256, in part by the Anhui Provincial Natural Science Foundation under Grant 2208085QF188, in part by the Jiangxi Provincial Natural Science Foundation under Grant 20224BAB211001, and in part by the US National Science Foundation under Grant CCF-2246917.}
\thanks{L. Xiao is with the Department of Electronic Engineering and Information Science, University of Science and Technology of China, Hefei 230052, China, and also with the Institute of Artificial Intelligence, Hefei Comprehensive National Science Center, Hefei 230088, China (e-mail:
xiaoli11@ustc.edu.cn).}
\thanks{H. Huo is with School of Mathematics and Computer Sciences, Nanchang University, Nanchang 330031, China (e-mail: hyhuo@ncu.edu.cn).}
\thanks{X.-G. Xia is with the Department of Electrical and Computer Engineering, University of Delaware, Newark, DE 19716, USA (e-mail: xxia@ee.udel.edu).}
}

\maketitle

\begin{abstract}
The problem of robustly reconstructing an integer vector from its erroneous remainders appears in many applications in the field of multidimensional (MD) signal processing. To address this problem, a robust MD Chinese remainder theorem (CRT) was recently proposed for a special class of moduli, where the remaining integer matrices left-divided by a greatest common left divisor (gcld) of all the moduli are pairwise commutative and coprime. The strict constraint on the moduli limits the usefulness of the robust MD-CRT in practice. In this paper, we investigate the robust MD-CRT for a general set of moduli. We first introd-\\uce a necessary and sufficient condition on the difference between paired remainder errors, followed by a simple sufficient condition on the remainder error bound, for the robust MD-CRT for gener-\\al moduli, where the conditions are associated with (the minimum distances of) these lattices generated by gcld's of paired moduli, and a closed-form reconstruction algorithm is presented. We then generalize the above results of the robust MD-CRT from integer vectors/matrices to real ones. Finally, we validate the robust MD-CRT for general moduli by employing numerical simulations, and apply it to MD sinusoidal frequency estimation based on multiple sub-Nyquist samplers.

\end{abstract}

\begin{IEEEkeywords}
Chinese remainder theorem (CRT), integer vectors/matrices, multidimensional frequency estimation, remainder errors, robustness.
\end{IEEEkeywords}

\section{Introduction}\label{sec1}
The Chinese remainder theorem (CRT) \cite{crt_integer} is known to offer a solution to a system of linear congruence equations, namely, reconstructing a larger nonnegative integer from its remainders modulo several smaller positive integers (called moduli). It has a broad range of applications in many areas, such as computer arithmetic, digital signal processing, and cryptography \cite{crt_integer,CRT1,CRT2}. Nevertheless, the CRT reconstruction is extremely susceptible to errors in the remainders, in the sense that a very small error in any remainder might yield a large reconstruction error in the large integer of interest. This may cause failures in applications of the CRT, considering that the detected remainders are often erroneous due to environmental noise contamination. As such, during the past decades, the problem of robust reconstructions from the erroneous remainders has been continuously investigated, where ``robustness'' means that the reconstruction error is upper bounded by the remainder error bound \cite{xia1-2007,xixiaixiao,wjwang2010,binyang2014,xiaoxia1,wenjiemaximum,xiaoxia7}. More specifically, for addressing this robust remaindering problem, a robust CRT has been introduced, of which the basic idea is to accurately determine all the quotients (called folding numbers) of the large integer divided by the moduli. A thorough review of the robust CRT and its various generalizations is presented in \cite{xiaoli}. To distinguish from the robust multidimensional (MD) CRT for integer vector reconstruction studied in this paper, we refer to the robust CRT for integer reconstruction as the robust 1-D CRT, which has been found to have potential applications to sinusoidal frequency estimation with sub-Nyquist samplings and phase unwrapping for radar interferometry \cite{mruegg2007,gli1-2007,yimin2008,xwli1-2011,grslet2013,aa2015}, error control neural coding \cite{fiete}, signal recovery using multi-channel modulo samplers \cite{ganganlulu}, and wireless sensor networks with fault tolerance \cite{fiete2,chessa,yishengsu}.

Considering that signals found in modern applications often have a multidimensional structure, e.g., multiple input multiple output (MIMO) radar and MIMO communication systems, we recently studied exact and robust reconstructions of an integer vector from its (erroneous) remainders modulo several moduli in \cite{md-crt}, where the moduli are nonsingular integer matrices and the remainders are integer vectors. Concretely, we first derived the MD-CRT for a general set of moduli, via which an integer vector can be accurately reconstructed from the remainders, if this integer vector is within the fundamental parallelepiped of the lattice that is generated by a least common right multiple of all the moduli. We then introduced the robust MD-CRT for a special class of moduli, where these remaining integer matrices left-divided by a greatest common left divisor of all the moduli are pairwise commutative and coprime. In this special case, the robust MD-CRT basically states that an integer vector within a certain reconstruction range can be robustly reconstructed from its erroneous remainders and the moduli, if the remainder error bound is smaller than a quarter of the minimum distance of the lattice that is generated by a greatest common left divisor of all the moduli. One can clearly see that there is the commutativity and coprimeness constraint on the matrix moduli for the robust MD-CRT in \cite{md-crt}, which might be too strong and therefore may limit the applications of the robust MD-CRT in practice.

In this paper, we propose the robust MD-CRT for a general set of moduli on which the undesirable matrix commutativity and coprimeness constraint we imposed in \cite{md-crt} is no longer required. Instead of accurately determining the folding vectors $\left\{\textbf{n}_i\right\}_{i=1}^L$ (namely, the quotients of an integer vector of interest $\textbf{m}$ left-divided by moduli $\left\{\textbf{M}_i\right\}_{i=1}^L$) in \cite{md-crt}, we attempt to accurately determine $\left\{\textbf{M}_i\textbf{n}_i\right\}_{i=1}^L$, and thereby obtain a robust reconstruction $\tilde{\textbf{m}}$ of $\textbf{m}$  by averaging the reconstructions calculated from the determined folding vectors, i.e., $\tilde{\textbf{m}}=\frac{1}{L}\sum_{i=1}^{L}\left(\textbf{M}_{i}\textbf{n}_{i}+\tilde{\textbf{r}}_{i}\right)$, in this paper, where $\left\{\tilde{\textbf{r}}_{i}\right\}_{i=1}^L$ denote the erroneous remainders. Of note, this strategy actually facilitates the robust MD-CRT for a general set of moduli by avoiding the difficulties brought about by the non-commutativity of matrix multiplication. More precisely, we first present a necessary and sufficient condition on the difference between paired remainder errors, as well as a simple sufficient condition on the remainder error bound, for the robust MD-CRT for general moduli, where the conditions are related with (the minimum distances of) the lattices that are generated by greatest common left divisors of paired moduli. At the same time, a closed-form reconstruction algorithm for the derived robust MD-CRT is proposed as well. In addition, we generalize the above results of the robust MD-CRT from integer vector/matrix cases to real-valued vector/matrix cases. We finally validate the robust MD-CRT for general moduli by conducting some numerical simulations, and apply it to frequ-\\ency estimation for a complex MD sinusoidal signal undersampled with multiple sub-Nyquist samplers. It demonstrates that the use of the robust MD-CRT with $L$ properly chosen moduli $\left\{\textbf{M}_i\right\}_{i=1}^L$ (whose inverses are referred to as sub-Nyquist sampling matrices with sampling densities $\left\{|\text{det}(\textbf{M}_i)|\right\}_{i=1}^L$) can result in significant sampling density reduction over the Nyquist sam-\\pling density for MD sinusoidal frequency estimation.

The rest of this paper is organized as follows. We introduce the preliminary knowledge associated with integer vectors and integer matrices in Section \ref{sec2}, as well as our previously derived (robust) MD-CRT in Section \ref{sec3} where the robust MD-CRT is limited to a special class of moduli. In Section \ref{sec4}, we propose the robust MD-CRT for a general set of moduli, together with its closed-form reconstruction algorithm. We further generalize the robust MD-CRT from integer vectors/matrices to real ones in Section \ref{sec5}. We present simulation results of the robust MD-CRT and its application to MD sinusoidal frequency estimation with multiple sub-Nyquist samplers in noise in Section \ref{sec6}. We conclude this paper in Section \ref{sec7}.

\textit{Notations}: We utilize capital and lowercase boldfaced letters to denote matrices and vectors, respectively. Let $A(i,j)$ be the $(i,j)$-th element of a matrix $\textbf{A}$, and $a(i)$ be the $i$-th element of a vector $\textbf{a}$. Let $\textbf{A}^T$, $\textbf{A}^{-1}$, $\textbf{A}^{-T}$, and $\text{det}(\textbf{A})$ denote the transpose, inverse, inverse transpose,
and determinant of $\textbf{A}$, respectively. We represent by $\text{diag}(a_1,a_2,\cdots,a_D)$ the diagonal matrix with a scalar $a_i$ being the $i$-th diagonal element. Let $\mathbb{R}$ and $\mathbb{Z}$ denote the sets of reals and integers, respectively. For a $D$-dimensional real vector $\textbf{a}\in\mathbb{R}^D$, $\textbf{a}\in[c,d)^D$ says that every element of $\textbf{a}$ is within the range of $[c,d)$ and $c,d\in\mathbb{R}$. Let $\textbf{I}$ and $\textbf{0}$ respectively be the identity matrix and the all-zero vector/matrix (their sizes are determined from the context). The symbol $\lfloor\cdot\rfloor$ denotes the floor operation, and it is implemented element-wisely if acting on one vector. We let $\text{adj}(\textbf{M})$ stand for the adjugate of a square matrix $\textbf{M}$. According to the definition, one can see that $\text{adj}(\textbf{M})$ is an integer matrix, if $\textbf{M}$ is an integer matrix. Throughout this paper, all matrices are square matrices, unless otherwise stated.

\section{Preliminaries}\label{sec2}

To make this paper self-contained, this section reviews some of formal definitions and basic properties pertaining to lattices, integer vectors, and integer matrices \cite{lattice1,chent,md-crt}.

1) \textit{Lattice}: Given a $D\times D$ nonsingular matrix $\textbf{M}\in\mathbb{R}^{D\times D}$, a lattice generated by $\textbf{M}$ is defined as
\begin{equation}
\mathcal{L}(\textbf{M})=\left\{\textbf{M}\textbf{n}\,|\,\textbf{n}\in\mathbb{Z}^{D}\right\}.
\end{equation}

2) \textit{The shortest vector problem (SVP) on lattice}: For a lattice $\mathcal{L}(\textbf{M})$ that is generated by a nonsingular matrix $\textbf{M}\in\mathbb{R}^{D\times D}$, its minimum distance, denoted as $\lambda_{\mathcal{L}(\textbf{M})}$, is defined as the smallest distance between any two distinct lattice points, i.e.,
\begin{equation}\label{svp}
\lambda_{\mathcal{L}(\textbf{M})}=\min_{\substack{\textbf{w},\textbf{v}\in\mathcal{L}(\textbf{M}), \\\textbf{w}\neq\textbf{v}}}\lVert\textbf{w}-\textbf{v}\rVert.
\end{equation}
As we know, a lattice is closed under addition and subtraction operations. The minimum distance of $\mathcal{L}(\textbf{M})$ is therefore equal to the length (magnitude) of the shortest non-zero lattice point, i.e., $\lambda_{\mathcal{L}(\textbf{M})}=\min_{\textbf{v}\in\mathcal{L}(\textbf{M})\backslash\{\textbf{0}\}}\,\lVert\textbf{v}\rVert$.

3) \textit{The closest vector problem (CVP) on lattice}: For a lattice $\mathcal{L}(\textbf{M})$ that is generated by a nonsingular matrix $\textbf{M}\in\mathbb{R}^{D\times D}$, the closest lattice point in $\mathcal{L}(\textbf{M})$ to a given arbitrary point $\textbf{w}\in\mathbb{R}^{D}$ is defined as
\begin{equation}\label{cvp}
\textbf{p}=\argmin_{\textbf{v}\in\mathcal{L}(\textbf{M})}\,\lVert\textbf{v}-\textbf{w}\rVert.
\end{equation}

\textit{Remark}: There have been many algorithms for handling the SVP and CVP problems in the literature (see, e.g., \cite{latticproblem,latticproblem2}).
We note that the distance above in (\ref{svp}) and (\ref{cvp}) can be measured by any norm of vectors, such as the $\ell_2$ norm $\lVert\textbf{v}\rVert_2=\sqrt{\sum_i\lvert v(i)\rvert^2}$,
the $\ell_1$ norm $\lVert\textbf{v}\rVert_1=\sum_i\lvert v(i)\rvert$, and the $\ell_\infty$ norm $\lVert\textbf{v}\rVert_\infty=\max_i\lvert v(i)\rvert$.
In this paper, the SVP and CVP problems are identified as the integer quadratic programming problems. Hence, we can solve them (i.e., (\ref{svp}) and (\ref{cvp})) utilizing enumeration \cite{enum} and MOSEK with CVX \cite{boyd}, respectively.

4) \textit{Notation} $\mathcal{N}(\textbf{M})$: Given a $D\times D$ nonsingular integer matrix $\textbf{M}\in\mathbb{Z}^{D\times D}$, the notation $\mathcal{N}(\textbf{M})$ is defined as
\begin{equation}
\mathcal{N}(\textbf{M})=\left\{\textbf{k}\,|\,\textbf{k}=\textbf{M}\textbf{x}, \textbf{x}\in[0,1)^{D} \text{ and } \textbf{k}\in\mathbb{Z}^{D}\right\}.
\end{equation}
The number of elements in $\mathcal{N}(\textbf{M})$ is equal to $\left|\text{det}(\textbf{M})\right|$.

5) \textit{Division representation for integer vectors}: Given a $D\times D$ nonsingular integer matrix $\textbf{M}\in\mathbb{Z}^{D\times D}$, any integer vector $\textbf{m}\in\mathbb{Z}^{D}$ can be uniquely represented as $\textbf{m}=\textbf{M}\textbf{n}+\textbf{r}$ with $\textbf{r}\in\mathcal{N}(\textbf{M})$ and $\textbf{n}\in\mathbb{Z}^{D}$. For modular representation, it is denoted as
\begin{equation}
\textbf{m}\equiv \textbf{r} \!\!\mod \textbf{M},
\end{equation}
where $\textbf{M}$ is a modulus, and $\textbf{n}$ and $\textbf{r}$ are the folding vector and remainder of $\textbf{m}$ with respect to $\textbf{M}$, respectively.

\textit{Remark}: The folding vector and the remainder are computed as $\textbf{n}=\lfloor\textbf{M}^{-1}\textbf{m}\rfloor$ and $\textbf{r}=\textbf{m}-\textbf{M}\lfloor\textbf{M}^{-1}\textbf{m}\rfloor$. As $\lfloor\textbf{M}^{-1}\textbf{m}\rfloor$ may suffer from round-off errors due to finite precision on computers, an alternative for computing $\textbf{r}$ is given by
\begin{equation}\label{remaindercal}
\textbf{r}=\textbf{M}\left(\text{adj}(\textbf{M})\textbf{m}\!\!\mod \text{det}(\textbf{M})\right)/\text{det}(\textbf{M}),
\end{equation}
in which the operation ``mod'' means that $\text{adj}(\textbf{M})\textbf{m}$ is element-wisely modulo $\text{det}(\textbf{M})$.

6) \textit{Unimodular matrix}: A square matrix $\textbf{U}$ is unimodular if it is an integer matrix with $|\text{det}(\textbf{U})|=1$. For a unimodular matrix $\textbf{U}$, its inverse $\textbf{U}^{-1}$ is unimodular, due to $\textbf{U}^{-1}=\text{adj}(\textbf{U})/\text{det}(\textbf{U})$.

7) \textit{Divisor}: An integer matrix $\textbf{A}$ is a left divisor of an integer matrix $\textbf{M}$ if $\textbf{A}^{-1}\textbf{M}$ is an integer matrix. If $\textbf{A}$ is a left divisor of each of all $L\geq2$ integer matrices $\textbf{M}_1, \textbf{M}_2, \cdots, \textbf{M}_L$, we call $\textbf{A}$ a common left divisor (cld) of $\textbf{M}_1, \textbf{M}_2, \cdots, \textbf{M}_L$. In particular, $\textbf{A}$ is a greatest common left divisor (gcld) of $\textbf{M}_1, \textbf{M}_2, \cdots, \textbf{M}_L$, if any other cld is a left divisor of $\textbf{A}$. One can readily see that among all cld's, a gcld has the largest absolute determinant and is unique up to post-multiplication by a unimodular matrix.

8) \textit{Multiple}: A nonsingular integer matrix $\textbf{A}$ is a left multiple of an integer matrix $\textbf{M}$, if there is a nonsingular integer matrix $\textbf{P}$ such that $\textbf{A}=\textbf{P}\textbf{M}$. When $\textbf{A}$ is a left multiple of each of all $L\geq2$ integer matrices $\textbf{M}_1, \textbf{M}_2, \cdots, \textbf{M}_L$, we call $\textbf{A}$ a common left multiple (clm) of $\textbf{M}_1, \textbf{M}_2, \cdots, \textbf{M}_L$. In particular, $\textbf{A}$ is a least common left multiple (lclm) of $\textbf{M}_1, \textbf{M}_2, \cdots, \textbf{M}_L$, if any other clm is a left multiple of $\textbf{A}$. Apparently, among all clm's, an lclm has the smallest absolute determinant and is unique up to pre-multiplication by a unimodular matrix.

\textit{Remark}: Similar to 5) and 6) above, we can define right divisor/multiple, common right divisor/multiple (crd/crm), greatest common right divisor (gcrd), and least common right multiple (lcrm), respectively. Both divisors and multiples are supposed to be nonsingular integer matrices throughout this paper.

9) \textit{Coprimeness}: Two integer matrices $\textbf{M}$ and $\textbf{N}$ are said to be left (right) coprime if their gcld (gcrd) is unimodular. If $\textbf{M}$ and $\textbf{N}$ are commutative, i.e., $\textbf{M}\textbf{N}=\textbf{N}\textbf{M}$, their left coprimeness and right coprimeness imply each other, and so we will simply use ``coprimeness''. If $\textbf{M}$ and $\textbf{N}$ are commutative and coprime, $\textbf{M}\textbf{N}$ is an lcrm and an lclm, and so we will simply use ``lcm''.

10) \textit{Bezout's theorem}: Let $\textbf{L}\in\mathbb{Z}^{D\times D}$ be a gcld of two integer matrices $\textbf{M}$ and $\textbf{N}\in\mathbb{Z}^{D\times D}$. There exist integer matrices $\textbf{P}$ and $\textbf{Q}\in\mathbb{Z}^{D\times D}$ such that
\begin{equation}\label{bezo}
\textbf{M}\textbf{P}+\textbf{N}\textbf{Q}=\textbf{L}.
\end{equation}
Of note, how to compute the accompanying matrices $\textbf{P}$ and $\textbf{Q}$ will be presented in 12) below.
Similarly, letting $\textbf{L}\in\mathbb{Z}^{D\times D}$ be a gcrd of $\textbf{M}$ and $\textbf{N}$, there exist integer matrices $\textbf{P}$ and $\textbf{Q}$ such that $\textbf{P}\textbf{M}+\textbf{Q}\textbf{N}=\textbf{L}$.

11) \textit{The Smith form}: A rank-$\gamma$ integer matrix $\textbf{M}\in\mathbb{Z}^{D\times K}$ can be factorized as
\begin{equation}\label{smithform}
\textbf{U}\textbf{M}\textbf{V}=
\begin{cases}
    \left(
      \begin{array}{cc}
        \bm{\Lambda} & \bm{0} \\
      \end{array}
    \right), & \text{if } K>D,\\
        \bm{\Lambda},
       & \text{if } K=D,\\
    \left(
      \begin{array}{c}
        \bm{\Lambda} \\
        \bm{0} \\
      \end{array}
    \right),
                 & \text{if } K<D,
\end{cases}
\end{equation}
where $\textbf{U}\in\mathbb{Z}^{D\times D}$ and $\textbf{V}\in\mathbb{Z}^{K\times K}$ are unimodular matrices,  and $\bm{\Lambda}$ is a $\min(K,D)\times\min(K,D)$ diagonal integer matrix, i.e., $\bm{\Lambda}\triangleq\text{diag}(\delta_1,\delta_2,\cdots,\delta_\gamma,0,\cdots,0)$. If we suppose that $\delta_1,\delta_2,\cdots,\delta_\gamma$ are positive and $\delta_i$ divides $\delta_{i+1}$ for each $1\leq i\leq\gamma-1$, then $\bm{\Lambda}$ is unique for the given matrix $\textbf{M}$, while $\textbf{U}$ and $\textbf{V}$ are generally not.
In addition, $\delta_1,\delta_2,\cdots,\delta_\gamma$ are termed the invariant factors and can be obtained by $\delta_i=d_i/d_{i-1}$ for $1\leq i\leq \gamma$, where $d_i$ is the gcd of all $i\times i$ determinantal minors of $\textbf{M}$ and $d_0=1$.

12) \textit{Calculation of gcld}: To compute a gcld of two nonsingular integer matrices $\textbf{M}$ and $\textbf{N}\in\mathbb{Z}^{D\times D}$, we let $\textbf{H}=\left(
             \begin{array}{cc}
               \textbf{M} & \textbf{N} \\
             \end{array}\right)\in\mathbb{Z}^{D\times 2D}$
and obtain the Smith form $\textbf{U}\left(
             \begin{array}{cc}
               \textbf{M} & \textbf{N} \\
             \end{array}\right)\textbf{V}=\left(
             \begin{array}{cc}
               \bm{\Lambda} & \bm{0} \\
             \end{array}\right)$,
where $\textbf{U}\in\mathbb{Z}^{D\times D}$ and $\textbf{V}\in\mathbb{Z}^{2D\times 2D}$ are unimodular matrices, and $\bm{\Lambda}\in\mathbb{Z}^{D\times D}$ is a diagonal integer matrix (which is also nonsingu-\\lar due to $\text{rank}(\textbf{H})=D$). After simple computations, we obtain $\left(
             \begin{array}{cc}
               \textbf{M} & \textbf{N} \\
             \end{array}\right)=\left(
             \begin{array}{cc}
               \textbf{L} & \bm{0} \\
             \end{array}\right)\textbf{V}^{-1}$,
where $\textbf{L}=\textbf{U}^{-1}\bm{\Lambda}$. Since $\textbf{U}^{-1}$ is unimodular, $\textbf{L}$ is a nonsingular integer matrix, i.e., $\textbf{L}\in\mathbb{Z}^{D\times D}$. Since $\textbf{V}^{-1}$ is unimodular, we can partition $\textbf{V}^{-1}$ into four $D\times D$ integer matrix blocks $\textbf{K}_{ij}\in\mathbb{Z}^{D\times D}$ for $1\leq i,j\leq2$, and obtain
\begin{equation}
\left(\begin{array}{cc}
\textbf{M} & \textbf{N} \\
\end{array}\right)=\left(
\begin{array}{cc}
\textbf{L} & \bm{0} \\
\end{array}\right)\left(
                    \begin{array}{cc}
                      \textbf{K}_{11} & \textbf{K}_{12} \\
                      \textbf{K}_{21} & \textbf{K}_{22}\\
                    \end{array}
                  \right).
\end{equation}
We therefore have $\textbf{M}=\textbf{L}\textbf{K}_{11}$ and $\textbf{N}=\textbf{L}\textbf{K}_{12}$. It is proved that such $\textbf{L}$ is actually a gcld of $\textbf{M}$ and $\textbf{N}$ (see \cite{md-crt} for the proof).

\textit{Remark}: We then provide a way to compute the accompanying matrices $\textbf{P}$ and $\textbf{Q}$ in (\ref{bezo}) for the Bezout's theorem. From the Smith form of $\textbf{H}$ above, we get $\left(
             \begin{array}{cc}
               \textbf{M} & \textbf{N} \\
             \end{array}\right)\textbf{V}=\left(
             \begin{array}{cc}
               \textbf{L} & \bm{0} \\
             \end{array}\right)$.
We partition $\textbf{V}$ into four $D\times D$ integer matrix blocks $\textbf{V}_{ij}\in\mathbb{Z}^{D\times D}$ for $1\leq i,j\leq2$, and have
\begin{equation}
\left(\begin{array}{cc}
\textbf{M} & \textbf{N} \\
\end{array}\right)\left(
                                \begin{array}{cc}
                                  \textbf{V}_{11} & \textbf{V}_{12} \\
                                  \textbf{V}_{21} & \textbf{V}_{22} \\
                                \end{array}
                              \right)=\left(
\begin{array}{cc}
\textbf{L} & \bm{0} \\
\end{array}\right).
\end{equation}
It implies the Bezout's theorem, expressed by $\textbf{M}\textbf{V}_{11}+\textbf{N}\textbf{V}_{21}=\textbf{L}$, i.e., $\textbf{P}=\textbf{V}_{11}$ and $\textbf{Q}=\textbf{V}_{21}$ in (\ref{bezo}).

13) \textit{Calculation of lcrm}: To calculate an lcrm of two nonsingular integer matrices $\textbf{M}$ and $\textbf{N}\in\mathbb{Z}^{D\times D}$, we let $\textbf{H}=\textbf{M}^{-1}\textbf{N}$. Because of $\textbf{M}^{-1}=\text{adj}(\textbf{M})/\text{det}(\textbf{M})$, $\textbf{M}^{-1}$ has all elements being rational numbers, and so does $\textbf{H}$. Letting $d$ be the lcm of the denominators of all elements in $\textbf{H}$, we know that $d\textbf{H}$ is a $D\times D$ nonsingular integer matrix. We compute the Smith form of $d\textbf{H}$ as $\textbf{U}d\textbf{H}\textbf{V}=\bm{\Lambda}$, i.e.,
\begin{equation}\label{yuel}
\textbf{M}^{-1}\textbf{N}=\textbf{U}^{-1}\text{diag}(\delta_1/d,\delta_2/d,\cdots,\delta_D/d)\textbf{V}^{-1},
\end{equation}
where $\textbf{U}$ and $\textbf{V}$ are $D\times D$ unimodular matrices (i.e., $\textbf{U},\textbf{V}\in\mathbb{Z}^{D\times D}$), and $\bm{\Lambda}=\text{diag}(\delta_1,\delta_2,\cdots,\delta_D)\in\mathbb{Z}^{D\times D}$ as derived in (\ref{smithform}). All the rational numbers $\delta_1/d,\delta_2/d,\cdots,\delta_D/d$ are represented by their irreducible forms; that is to say, for $1\leq i\leq D$, $\delta_i/d=\alpha_i/\beta_i$ where $\alpha_i$ and $\beta_i$ are coprime positive integers. Let $\bm{\Lambda}_\alpha=\text{diag}(\alpha_1,\alpha_2,\cdots,\alpha_D)$ and $\bm{\Lambda}_\beta=\text{diag}(\beta_1,\beta_2,\cdots,\beta_D)$. Based on (\ref{yuel}), we obtain $\textbf{M}^{-1}\textbf{N}=\textbf{U}^{-1}\bm{\Lambda}_\alpha\bm{\Lambda}_\beta^{-1}\textbf{V}^{-1}$.
Let $\textbf{P}=\textbf{U}^{-1}\bm{\Lambda}_\alpha$ and $\textbf{Q}=\textbf{V}\bm{\Lambda}_\beta$, which are clearly nonsingular integer matrices and right coprime. We hence have $\textbf{M}^{-1}\textbf{N}=\textbf{P}\textbf{Q}^{-1}$,
i.e., $\textbf{M}\textbf{P}=\textbf{N}\textbf{Q}$. It is proved that $\textbf{R}\triangleq\textbf{M}\textbf{P}=\textbf{N}\textbf{Q}$ is actually an lcrm of $\textbf{M}$ and $\textbf{N}$ (see \cite{chent} for the proof).

\textit{Remark}: For $L\geq3$ nonsingular integer matrices $\left\{\textbf{M}_i\right\}_{i=1}^L$, we can compute an lcrm of $\left\{\textbf{M}_i\right\}_{i=1}^L$ via computing an lcrm of two matrices iteratively, due to the fact that $\text{lcrm}\left(\textbf{M}_1,\textbf{M}_2,\cdots,\textbf{M}_L\right)$
$=\text{lcrm}\left(\text{lcrm}\left(\textbf{M}_1,\textbf{M}_2,\cdots,\textbf{M}_{L-1}\right),\textbf{M}_L\right)$ holds, which has been proved in \cite{md-crt}. Besides,
similar to the calculations of gcld and lcrm above, the calculations of gcrd and lclm can be obtained. For more details, we refer the reader to \cite{chent,md-crt}.

\section{Previous Results on (Robust) MD-CRT}\label{sec3}
Consider a system of congruences
\begin{equation}\label{ddon}
\left\{\begin{array}{ll}
\textbf{m}\equiv \textbf{r}_1  \!\!\mod \textbf{M}_1\\
\textbf{m}\equiv \textbf{r}_2  \!\!\mod \textbf{M}_2\\
\:\;\;\;\;\vdots\\
\textbf{m}\equiv \textbf{r}_L  \!\!\mod \textbf{M}_L,\\
\end{array}\right.
\end{equation}
where moduli $\left\{\textbf{M}_i\right\}_{i=1}^{L}\in\mathbb{Z}^{D\times D}$ are nonsingular integer matrices, and
$\textbf{R}\in\mathbb{Z}^{D\times D}$ is anyone of their lcrm's. With respect to (\ref{ddon}), let us recall the results about the (robust) MD-CRT we recently proposed in \cite{md-crt} as follows. For simplicity of notation, we will use $\textbf{r}=\langle\textbf{m}\rangle_{\textbf{M}}$ to denote the remainder $\textbf{r}$ of $\textbf{m}$ modulo $\textbf{M}$.

\subsection{MD-CRT}
\begin{proposition}[\!\!\!\cite{md-crt}\!]\label{pro1}
Let moduli $\left\{\textbf{M}_i\right\}_{i=1}^{L}$ in (\ref{ddon}) be arbitrary nonsingular integer matrices. An integer vector $\textbf{m}\in\mathcal{N}(\textbf{R})$ can be accurately reconstructed from its remainders $\{\textbf{r}_i\}_{i=1}^L$.
\end{proposition}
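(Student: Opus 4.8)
The plan is to establish two facts about the remainder map $\textbf{m}\mapsto(\textbf{r}_1,\cdots,\textbf{r}_L)$ restricted to $\mathcal{N}(\textbf{R})$: that it is injective (uniqueness of the reconstruction) and that its unique preimage admits a closed-form expression (existence). First I would rewrite each congruence as the lattice membership $\textbf{m}-\textbf{r}_i\in\mathcal{L}(\textbf{M}_i)$, and reduce everything to the two-modulus case, handling $L\geq3$ afterwards through the iteration $\text{lcrm}(\textbf{M}_1,\cdots,\textbf{M}_L)=\text{lcrm}(\text{lcrm}(\textbf{M}_1,\cdots,\textbf{M}_{L-1}),\textbf{M}_L)$ recalled in the remark following 13).

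The conceptual core is the identity $\mathcal{L}(\textbf{R})=\bigcap_{i=1}^{L}\mathcal{L}(\textbf{M}_i)$. For two moduli, $\mathcal{L}(\textbf{R})\subseteq\mathcal{L}(\textbf{M}_1)\cap\mathcal{L}(\textbf{M}_2)$ is immediate from $\textbf{R}=\textbf{M}_1\textbf{P}=\textbf{M}_2\textbf{Q}$ in 13). For the reverse inclusion I would take $\textbf{v}=\textbf{M}_1\textbf{a}=\textbf{M}_2\textbf{b}$, write $\textbf{v}=\textbf{R}\textbf{s}$ with $\textbf{s}\triangleq\textbf{Q}^{-1}\textbf{b}$, and note that both $\textbf{Q}\textbf{s}=\textbf{b}$ and $\textbf{P}\textbf{s}=\textbf{M}_1^{-1}\textbf{M}_2\textbf{b}=\textbf{a}$ are integer vectors; since $\textbf{P},\textbf{Q}$ are right coprime, Bezout's theorem supplies $\textbf{X}\textbf{P}+\textbf{Y}\textbf{Q}=\textbf{I}$, whence $\textbf{s}=\textbf{X}(\textbf{P}\textbf{s})+\textbf{Y}(\textbf{Q}\textbf{s})=\textbf{X}\textbf{a}+\textbf{Y}\textbf{b}\in\mathbb{Z}^D$ and thus $\textbf{v}\in\mathcal{L}(\textbf{R})$. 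Granting this identity, uniqueness is short: if $\textbf{m},\textbf{m}'\in\mathcal{N}(\textbf{R})$ share all remainders then $\textbf{m}-\textbf{m}'\in\bigcap_i\mathcal{L}(\textbf{M}_i)=\mathcal{L}(\textbf{R})$, while $\textbf{m}=\textbf{R}\textbf{x}$ and $\textbf{m}'=\textbf{R}\textbf{x}'$ with $\textbf{x},\textbf{x}'\in[0,1)^D$ force $\textbf{x}-\textbf{x}'\in\mathbb{Z}^D\cap(-1,1)^D=\{\textbf{0}\}$.

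For existence and the closed form (two moduli, then iterate), let $\textbf{L}$ be a gcld of $\textbf{M}_1,\textbf{M}_2$ with Bezout matrices from (\ref{bezo}), $\textbf{M}_1\textbf{A}+\textbf{M}_2\textbf{B}=\textbf{L}$. Because the given remainders come from an actual $\textbf{m}$, the system is consistent, and subtracting the two division representations gives $\textbf{r}_2-\textbf{r}_1\in\mathcal{L}(\textbf{M}_1)+\mathcal{L}(\textbf{M}_2)=\mathcal{L}(\textbf{L})$, the last equality following from $\textbf{M}_1=\textbf{L}\textbf{K}_{11}$, $\textbf{M}_2=\textbf{L}\textbf{K}_{12}$ in 12) together with Bezout. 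Hence $\textbf{c}\triangleq\textbf{L}^{-1}(\textbf{r}_2-\textbf{r}_1)\in\mathbb{Z}^D$, and $\textbf{m}^{(0)}\triangleq\textbf{M}_1\textbf{A}\textbf{c}+\textbf{r}_1$ satisfies both congruences, since $\textbf{M}_1\textbf{A}\textbf{c}=(\textbf{r}_2-\textbf{r}_1)-\textbf{M}_2\textbf{B}\textbf{c}$ also yields $\textbf{m}^{(0)}\equiv\textbf{r}_2\!\!\mod\textbf{M}_2$. Reducing modulo the lcrm, $\textbf{m}=\langle\textbf{m}^{(0)}\rangle_{\textbf{R}}$, preserves both congruences because $\mathcal{L}(\textbf{R})\subseteq\mathcal{L}(\textbf{M}_i)$ and lands the result in $\mathcal{N}(\textbf{R})$; by the uniqueness above, this $\textbf{m}$ is exactly the target vector.

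I expect the main obstacle to be the two lattice identities linking the algebra of divisors/multiples to sums and intersections of lattices, and in particular the reverse inclusion $\mathcal{L}(\textbf{M}_1)\cap\mathcal{L}(\textbf{M}_2)\subseteq\mathcal{L}(\textbf{R})$, where the right-coprimeness of $\textbf{P},\textbf{Q}$ is indispensable and every factorization must respect the left/right conventions dictated by non-commutativity. The remaining work—propagating consistency and the reduction $\langle\cdot\rangle_{\textbf{R}}$ through the $L-1$ iterations—is routine once the two-modulus step and the iterative lcrm identity are secured.
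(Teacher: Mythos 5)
Your proposal is correct and follows essentially the same route as the paper: your two-modulus construction $\textbf{m}^{(0)}=\textbf{r}_1+\textbf{M}_1\textbf{A}\textbf{L}^{-1}(\textbf{r}_2-\textbf{r}_1)$ reduced modulo $\textbf{R}$ is exactly the paper's cascaded step $\textbf{m}_1=\left\langle\textbf{r}_1+\textbf{M}_1\textbf{P}_{2}\textbf{G}_{2}^{-1}(\textbf{r}_2-\textbf{r}_1)\right\rangle_{\textbf{R}_{3}}$ (with $\textbf{A}=\textbf{P}_2$ and $\textbf{L}=\textbf{G}_2$), and your handling of $L\geq3$ via $\text{lcrm}(\textbf{M}_1,\cdots,\textbf{M}_L)=\text{lcrm}(\text{lcrm}(\textbf{M}_1,\cdots,\textbf{M}_{L-1}),\textbf{M}_L)$ is precisely the paper's cascade architecture. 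Your explicit uniqueness argument through the lattice identity $\mathcal{L}(\textbf{R})=\bigcap_{i=1}^{L}\mathcal{L}(\textbf{M}_i)$ spells out what the paper delegates to \cite{md-crt}, but it is the same underlying mechanism rather than a different approach.
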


Notice that a cascaded reconstruction algorithm for the MD-CRT in Proposition \ref{pro1} is introduced in \cite{md-crt}. For $2\leq i\leq L$, let $\textbf{R}_{i}$ be an lcrm of $\left\{\textbf{M}_k\right\}_{k=1}^{i-1}$, $\textbf{G}_{i}$ be a gcld of $\textbf{M}_i$ and $\textbf{R}_{i}$, and $\textbf{P}_{i}$ and $\textbf{Q}_{i}$ be the accompanying matrices in the Bezout's theorem with $\textbf{R}_{i}\textbf{P}_{i}+\textbf{M}_{i}\textbf{Q}_{i}=\textbf{G}_{i}$. On the basis of 12) and 13) in Sec. \ref{sec2}, all these involved matrices can be computed in advance.
Here, we briefly summarize the core steps of the cascaded reconstru-\\ction algorithm for the MD-CRT.

\begin{itemize}
  \item A solution (denoted as $\textbf{m}_1\in\mathcal{N}(\textbf{R}_{3})$) to
     \begin{equation}\label{firststage}
    \left\{\begin{array}{ll}
    \textbf{m}\equiv \textbf{r}_1 \!\!\mod \textbf{M}_1\\
    \textbf{m}\equiv \textbf{r}_2 \!\!\mod \textbf{M}_2\\
    \end{array}\right.
    \end{equation}
    is obtained as $\textbf{m}_1=\left\langle\textbf{r}_1+\textbf{M}_1\textbf{P}_{2}\textbf{G}_{2}^{-1}(\textbf{r}_2-\textbf{r}_1)\right\rangle_{\textbf{R}_{3}}$.
  \item Based on the cascade architecture of the congruences, we next obtain a solution (denoted as $\textbf{m}_2\in\mathcal{N}(\textbf{R}_{4})$) to
  \begin{equation}\label{secondstage}
    \left\{\begin{array}{ll}
    \textbf{m}\equiv \textbf{m}_1 \!\!\mod \textbf{R}_{3}\\
    \textbf{m}\equiv \textbf{r}_3 \!\!\mod \textbf{M}_3\\
    \end{array}\right.
    \end{equation}
    as $\textbf{m}_2=\left\langle\textbf{m}_1+\textbf{R}_3\textbf{P}_{3}\textbf{G}_{3}^{-1}(\textbf{r}_3-\textbf{m}_1)\right\rangle_{\textbf{R}_{4}}$.
  \item Following the above steps, we assemble two congruences at a time, until a solution (denoted as $\textbf{m}_{L-1}\in\mathcal{N}(\textbf{R})$) to
        \begin{equation}
    \left\{\begin{array}{ll}
    \textbf{m}\equiv \textbf{m}_{L-2} \!\!\mod \textbf{R}_{L}\\
    \textbf{m}\equiv \textbf{r}_L \!\!\mod \textbf{M}_L\\
    \end{array}\right.
    \end{equation}
is calculated as $\textbf{m}_{L-1}=\left\langle\textbf{m}_{L-2}+\textbf{R}_L\textbf{P}_{L}\textbf{G}_{L}^{-1}(\textbf{r}_L-\textbf{m}_{L-2})\right\rangle_{\textbf{R}}$.
As verified in \cite{md-crt}, the lcrm (i.e., $\textbf{R}$) of $\left\{\textbf{M}_i\right\}_{i=1}^L$ is an lcrm of $\textbf{R}_{L}$ and $\textbf{M}_L$, and $\textbf{m}_{L-1}$ is a unique solution to (\ref{ddon}) from the MD-CRT if $\textbf{m}\in\mathcal{N}(\textbf{R})$, i.e., $\textbf{m}=\textbf{m}_{L-1}$.
\end{itemize}

\textit{Remark}: If the moduli $\left\{\textbf{M}_i\right\}_{i=1}^{L}\in\mathbb{Z}^{D\times D}$ are pairwise commutative and coprime,  it is clear that $\textbf{R}=\textbf{M}_1\textbf{M}_2\cdots\textbf{M}_L\textbf{U}\in\mathbb{Z}^{D\times D}$ is an lcrm of all the moduli for any unimodular matrix $\textbf{U}$, and the MD-CRT in Proposition \ref{pro1} has a closed-form solution as
\begin{equation}
\textbf{m}=\left\langle\sum_{i=1}^{L}\textbf{W}_i\widehat{\textbf{W}}_i\textbf{r}_i\right\rangle_{\textbf{R}},
\end{equation}
where $\textbf{W}_i=\textbf{M}_1\cdots\textbf{M}_{i-1}\textbf{M}_{i+1}\cdots\textbf{M}_L$, and $\widehat{\textbf{W}}_i$ is the accompanying matrix in the Bezout's theorem ($\textbf{W}_i\widehat{\textbf{W}}_i+\textbf{M}_i\textbf{Q}_i=\textbf{I}$
with $\textbf{Q}_i\in\mathbb{Z}^{D\times D}$) and can be calculated in advance.

\subsection{Robust MD-CRT for a special class of moduli}
In \cite{md-crt}, the robust MD-CRT was first proposed for a special class of moduli, i.e., moduli $\left\{\textbf{M}_i\right\}_{i=1}^{L}$ in (\ref{ddon}) are given by
\begin{equation}\label{specmod}
\textbf{M}_i=\textbf{M}\bm{\Gamma}_i\,\text{ for }1\leq i\leq L,
\end{equation}
where $\left\{\bm{\Gamma}_i\right\}_{i=1}^{L}\in\mathbb{Z}^{D\times D}$ are pairwise commutative and coprime, and $\textbf{M}\in\mathbb{Z}^{D\times D}$. In this special case, $\textbf{R}=\textbf{M}\bm{\Gamma}_1\bm{\Gamma}_2\cdots\bm{\Gamma}_L\textbf{U}$ for any unimodular matrix $\textbf{U}$ is an lcrm of $\left\{\textbf{M}_i\right\}_{i=1}^L$, and the basic idea of the robust MD-CRT in \cite{md-crt} is to accurately determine the folding vectors $\{\textbf{n}_i\}_{i=1}^{L}$ from the erroneous remainders
\begin{equation}
\tilde{\textbf{r}}_i\triangleq\textbf{r}_i+\triangle\textbf{r}_i\in\mathcal{N}(\textbf{M}_i)\,\text{ for }1\leq i\leq L,
\end{equation}
and afterwards obtain a robust reconstruction of $\textbf{m}$ as
\begin{equation}\label{rrr}
\tilde{\textbf{m}}=\frac{1}{L}\sum_{i=1}^{L}\left(\textbf{M}_{i}\textbf{n}_{i}+\tilde{\textbf{r}}_{i}\right),
\end{equation}
where $\left\{\triangle\textbf{r}_i\right\}_{i=1}^{L}\in\mathbb{Z}^{D}$ are the remainder errors.
Define
\begin{equation}\label{rangess}
\mathcal{A}_i\triangleq\left\{\textbf{m}\in\mathbb{Z}^{D}\,|\,\, \lfloor \textbf{M}_i^{-1}\textbf{m}\rfloor\in\mathcal{N}(\bm{\Gamma}_1\cdots\bm{\Gamma}_{i-1}\bm{\Gamma}_{i+1}\cdots\bm{\Gamma}_L\textbf{U}_i)\right\}
\end{equation}
for $1\leq i\leq L$, where $\left\{\textbf{U}_i\right\}_{i=1}^L$ are any unimodular matrices. The robust MD-CRT for this special class of moduli expressed in (\ref{specmod}) was obtained in \cite{md-crt}, as stated below.

\begin{proposition}[\!\!\!\cite{md-crt}\!]\label{pro2}
Let moduli $\left\{\textbf{M}_i\right\}_{i=1}^{L}$ in (\ref{ddon}) be given by (\ref{specmod}). We can accurately determine the folding vectors $\{\textbf{n}_i\}_{i=1}^{L}$ of an integer vector $\textbf{m}\in\bigcup_{i=1}^{L}\mathcal{A}_i$ (without loss of generality, we suppose that $\textbf{m}\in\mathcal{A}_{1}$) from the erroneous remainders $\left\{\tilde{\textbf{r}}_i\right\}_{i=1}^L$, if and only if
\begin{equation}\label{xiaxia}
\textbf{0}=\argmin_{\textbf{h}\in\mathcal{L}(\textbf{M})}\,\lVert\textbf{h}-(\triangle\textbf{r}_i-\triangle\textbf{r}_1)\rVert\; \text{ for }2\leq i\leq L.
\end{equation}
Moreover, letting $\tau$ be the remainder error bound, i.e., $\lVert\triangle\textbf{r}_i\rVert\leq\tau$ for $1\leq i\leq L$, a simple sufficient condition is
\begin{equation}\label{condition_suf2}
\tau<\frac{\lambda_{\mathcal{L}(\textbf{M})}}{4}.
\end{equation}
Once $\{\textbf{n}_i\}_{i=1}^L$ are accurately determined, we can obtain a robust reconstruction $\tilde{\textbf{m}}$ of $\textbf{m}$ by (\ref{rrr}) such that $\lVert\tilde{\textbf{m}}-\textbf{m}\rVert\leq\tau$.
\end{proposition}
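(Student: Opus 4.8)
The plan is to reduce the recovery of the folding vectors to a closest-vector computation on $\mathcal{L}(\textbf{M})$ followed by an application of the exact MD-CRT (Proposition \ref{pro1}) to the coprime matrices $\left\{\boldsymbol{\Gamma}_i\right\}$. First I would record the governing identity. Since $\textbf{m}=\textbf{M}\boldsymbol{\Gamma}_1\textbf{n}_1+\textbf{r}_1=\textbf{M}\boldsymbol{\Gamma}_i\textbf{n}_i+\textbf{r}_i$, subtracting gives $\textbf{r}_i-\textbf{r}_1=\textbf{M}(\boldsymbol{\Gamma}_1\textbf{n}_1-\boldsymbol{\Gamma}_i\textbf{n}_i)$, so that the observable difference of erroneous remainders obeys
\begin{equation}
\tilde{\textbf{r}}_i-\tilde{\textbf{r}}_1=\textbf{M}\textbf{k}_i+(\triangle\textbf{r}_i-\triangle\textbf{r}_1),\quad \textbf{k}_i\triangleq\boldsymbol{\Gamma}_1\textbf{n}_1-\boldsymbol{\Gamma}_i\textbf{n}_i\in\mathbb{Z}^{D},
\end{equation}
for $2\leq i\leq L$; that is, $\tilde{\textbf{r}}_i-\tilde{\textbf{r}}_1$ is the lattice point $\textbf{M}\textbf{k}_i\in\mathcal{L}(\textbf{M})$ corrupted by the error difference $\triangle\textbf{r}_i-\triangle\textbf{r}_1$.

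Next I would establish the stated equivalence. To recover $\textbf{k}_i$ one solves the CVP of $\tilde{\textbf{r}}_i-\tilde{\textbf{r}}_1$ on $\mathcal{L}(\textbf{M})$ and left-multiplies the answer by $\textbf{M}^{-1}$. Because $\mathcal{L}(\textbf{M})$ is invariant under translation by any of its points, the substitution $\textbf{h}\mapsto\textbf{h}+\textbf{M}\textbf{k}_i$ shows in one line that the minimizer of $\lVert\textbf{h}-(\tilde{\textbf{r}}_i-\tilde{\textbf{r}}_1)\rVert$ over $\textbf{h}\in\mathcal{L}(\textbf{M})$ equals $\textbf{M}\textbf{k}_i$ precisely when $\textbf{0}$ minimizes $\lVert\textbf{h}-(\triangle\textbf{r}_i-\triangle\textbf{r}_1)\rVert$, which is exactly (\ref{xiaxia}). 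Hence, when (\ref{xiaxia}) holds for every $i$ the true $\textbf{k}_i$ is obtained; conversely, should (\ref{xiaxia}) fail for some $i$, the CVP returns a lattice point distinct from $\textbf{M}\textbf{k}_i$, and since the errors are unknown the genuine $\textbf{k}_i$ (hence $\textbf{n}_i$) cannot be identified from the data. The equivalence of the condition with recoverability of the full set $\left\{\textbf{n}_i\right\}$ is then completed by the reconstruction step below, which turns knowledge of $\left\{\textbf{k}_i\right\}$ into knowledge of $\left\{\textbf{n}_i\right\}$ and back.

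The core step is passing from the differences $\left\{\textbf{k}_i\right\}_{i=2}^{L}$ to the individual folding vectors. Setting $\textbf{v}\triangleq\boldsymbol{\Gamma}_1\textbf{n}_1$, the definition of $\textbf{k}_i$ together with $\textbf{v}\in\mathcal{L}(\boldsymbol{\Gamma}_1)$ and $\textbf{v}-\textbf{k}_i=\boldsymbol{\Gamma}_i\textbf{n}_i\in\mathcal{L}(\boldsymbol{\Gamma}_i)$ shows that $\textbf{v}$ solves the congruence system $\textbf{v}\equiv\textbf{0}\bmod\boldsymbol{\Gamma}_1$ and $\textbf{v}\equiv\textbf{k}_i\bmod\boldsymbol{\Gamma}_i$ for $2\leq i\leq L$. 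As $\left\{\boldsymbol{\Gamma}_i\right\}$ are pairwise commutative and coprime with lcrm $\boldsymbol{\Gamma}_1\cdots\boldsymbol{\Gamma}_L\textbf{U}_1$, Proposition \ref{pro1} yields a unique solution $\textbf{v}$ in $\mathcal{N}(\boldsymbol{\Gamma}_1\cdots\boldsymbol{\Gamma}_L\textbf{U}_1)$, whence $\textbf{n}_1=\boldsymbol{\Gamma}_1^{-1}\textbf{v}$ and $\textbf{n}_i=\boldsymbol{\Gamma}_i^{-1}(\textbf{v}-\textbf{k}_i)$. The delicate point, and the step I expect to be the main obstacle, is to verify that this CRT solution is exactly the true $\boldsymbol{\Gamma}_1\textbf{n}_1$: here the hypothesis $\textbf{m}\in\mathcal{A}_1$ enters, being equivalent to $\textbf{n}_1\in\mathcal{N}(\boldsymbol{\Gamma}_2\cdots\boldsymbol{\Gamma}_L\textbf{U}_1)$. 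Using commutativity to factor $\boldsymbol{\Gamma}_1\cdots\boldsymbol{\Gamma}_L=\boldsymbol{\Gamma}_1(\boldsymbol{\Gamma}_2\cdots\boldsymbol{\Gamma}_L)$ and the determinant count $|\det(\boldsymbol{\Gamma}_2\cdots\boldsymbol{\Gamma}_L)|$, I would show the identity $\mathcal{N}(\boldsymbol{\Gamma}_1\cdots\boldsymbol{\Gamma}_L\textbf{U}_1)\cap\mathcal{L}(\boldsymbol{\Gamma}_1)=\boldsymbol{\Gamma}_1\mathcal{N}(\boldsymbol{\Gamma}_2\cdots\boldsymbol{\Gamma}_L\textbf{U}_1)$, so that left multiplication by $\boldsymbol{\Gamma}_1$ carries the fundamental domain of $\textbf{n}_1$ bijectively onto the set of admissible solutions $\textbf{v}$; the true $\boldsymbol{\Gamma}_1\textbf{n}_1$ therefore lies in this set and, by CRT uniqueness, coincides with the reconstructed $\textbf{v}$, so every $\textbf{n}_i$ is recovered correctly.

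Finally I would dispatch the sufficient condition and the error bound, both routine. If $\lVert\triangle\textbf{r}_i\rVert\leq\tau$ for all $i$ and $\tau<\lambda_{\mathcal{L}(\textbf{M})}/4$, then $\lVert\triangle\textbf{r}_i-\triangle\textbf{r}_1\rVert\leq2\tau<\lambda_{\mathcal{L}(\textbf{M})}/2$; for any nonzero $\textbf{h}\in\mathcal{L}(\textbf{M})$ the triangle inequality gives $\lVert\textbf{h}-(\triangle\textbf{r}_i-\triangle\textbf{r}_1)\rVert\geq\lVert\textbf{h}\rVert-\lVert\triangle\textbf{r}_i-\triangle\textbf{r}_1\rVert>\lambda_{\mathcal{L}(\textbf{M})}-\lambda_{\mathcal{L}(\textbf{M})}/2=\lambda_{\mathcal{L}(\textbf{M})}/2>\lVert\triangle\textbf{r}_i-\triangle\textbf{r}_1\rVert$, so $\textbf{0}$ is the unique closest lattice point and (\ref{xiaxia}) holds. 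For the reconstruction error, once the $\textbf{n}_i$ are correct we have $\textbf{M}_i\textbf{n}_i+\tilde{\textbf{r}}_i=(\textbf{M}_i\textbf{n}_i+\textbf{r}_i)+\triangle\textbf{r}_i=\textbf{m}+\triangle\textbf{r}_i$, so averaging in (\ref{rrr}) gives $\tilde{\textbf{m}}-\textbf{m}=\frac{1}{L}\sum_{i=1}^{L}\triangle\textbf{r}_i$ and hence $\lVert\tilde{\textbf{m}}-\textbf{m}\rVert\leq\frac{1}{L}\sum_{i=1}^{L}\lVert\triangle\textbf{r}_i\rVert\leq\tau$.
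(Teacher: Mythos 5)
Your proposal is correct and follows essentially the same route as the paper's own argument (given here as the proof of Lemma \ref{lema1}, of which Proposition \ref{pro2} is the commutative--coprime special case): rewrite the CVP on $\mathcal{L}(\textbf{M})$ so that recovering the remainder differences is equivalent to condition (\ref{xiaxia}), invoke the exact MD-CRT of Proposition \ref{pro1} under the range hypothesis $\textbf{m}\in\mathcal{A}_1$ to recover the reference quantity, get the bound $\tau<\lambda_{\mathcal{L}(\textbf{M})}/4$ by the triangle inequality, and obtain $\lVert\tilde{\textbf{m}}-\textbf{m}\rVert\leq\tau$ by averaging. The only minor deviations are cosmetic: you divide out $\textbf{M}$ and work with the quotients $\bm{\Gamma}_i$ to determine $\{\textbf{n}_i\}$ directly (legitimate here by commutativity and coprimeness), and your necessity argument via the correspondence $\{\textbf{k}_i\}\leftrightarrow\{\textbf{n}_i\}$ replaces, and is an acceptable streamlining of, the paper's two-case analysis (Case A/Case B) showing that a wrong CVP output always corrupts some folding vector.
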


The necessary and sufficient condition (\ref{xiaxia}) means that the lattice point $\textbf{0}$ in $\mathcal{L}(\textbf{M})$ is the only closest lattice point to the difference of the remainder errors $\triangle\textbf{r}_i$ and $\triangle\textbf{r}_1$ for every $i$, $2\leq$\\$i\leq L$.

\textit{Remark}: In \cite{md-crt}, a closed-form reconstruction algorithm for the robust MD-CRT in Proposition \ref{pro2} was also provided.

\section{Robust MD-CRT for General Moduli}\label{sec4}
When moduli do not satisfy the constraint in (\ref{specmod}), the results (i.e., Proposition \ref{pro2} above) and reconstruction algorithm in \cite{md-crt} cannot be directly applied, which might limit the applications of the robust MD-CRT in practice. In this section, we consider the robust MD-CRT for a general set of moduli on which the constraint imposed in \cite{md-crt} is no longer required.

We can equivalently write (\ref{ddon}) as
\begin{equation}\label{ddon11}
\left\{\begin{array}{ll}
\textbf{m}=\textbf{M}_1\textbf{n}_1+\textbf{r}_1 \\
\textbf{m}=\textbf{M}_2\textbf{n}_2+\textbf{r}_2\\
\:\;\;\;\;\vdots\\
\textbf{m}=\textbf{M}_L\textbf{n}_L+\textbf{r}_L,\\
\end{array}\right.
\end{equation}
where $\{\textbf{n}_i\}_{i=1}^L$ are the folding vectors.
Without loss of generality, letting the first equation in (\ref{ddon11}) be a reference, we subtract it from the last $L-1$ equations, i.e.,
\begin{equation}\label{ddon22}
\left\{\begin{array}{ll}
\textbf{M}_1\textbf{n}_1-\textbf{M}_2\textbf{n}_2=\textbf{r}_2-\textbf{r}_1 \\
\textbf{M}_1\textbf{n}_1-\textbf{M}_3\textbf{n}_3=\textbf{r}_3-\textbf{r}_1\\
\:\;\;\;\;\:\;\;\;\;\:\;\;\;\;\:\;\;\;\;\:\;\;\vdots\\
\textbf{M}_1\textbf{n}_1-\textbf{M}_L\textbf{n}_L=\textbf{r}_L-\textbf{r}_1.\\
\end{array}\right.
\end{equation}
Define $\textbf{M}_{1i}=\text{gcld}(\textbf{M}_1,\textbf{M}_i)$, ${\bf{\Gamma}}_{1i}=\textbf{M}_{1i}^{-1}\textbf{M}_1$, and ${\bf{\Gamma}}_{i1}=\textbf{M}_{1i}^{-1}\textbf{M}_i$ for $2\leq i\leq L$. Then, left-multiplying $\textbf{M}_{1i}^{-1}$ on both sides of the $(i-1)$-th equation in (\ref{ddon22}) for $2\leq i\leq L$, we get
\begin{equation}\label{ddon33}
\left\{\begin{array}{ll}
{\bf{\Gamma}}_{12}\textbf{n}_1-{\bf{\Gamma}}_{21}\textbf{n}_2=\textbf{M}_{12}^{-1}(\textbf{r}_2-\textbf{r}_1)\vspace{1ex} \\
{\bf{\Gamma}}_{13}\textbf{n}_1-{\bf{\Gamma}}_{31}\textbf{n}_3=\textbf{M}_{13}^{-1}(\textbf{r}_3-\textbf{r}_1)\\
\:\;\;\;\;\:\;\;\;\;\:\;\;\;\;\:\;\;\;\;\:\;\;\vdots\\
{\bf{\Gamma}}_{1L}\textbf{n}_1-{\bf{\Gamma}}_{L1}\textbf{n}_L=\textbf{M}_{1L}^{-1}(\textbf{r}_L-\textbf{r}_1).\\
\end{array}\right.
\end{equation}
From (\ref{ddon33}), we know that $\left\{\textbf{M}_{1i}^{-1}(\textbf{r}_i-\textbf{r}_1)\right\}_{i=2}^{L}$ are integer vectors, i.e., for $2\leq i\leq L$,
\begin{equation}
\textbf{r}_i-\textbf{r}_1\in \mathcal{L}(\textbf{M}_{1i}).
\end{equation}

In the same way as that used in \cite{md-crt}, for each $2\leq i\leq L$, we estimate $\textbf{r}_i-\textbf{r}_1$ from the erroneous remainders $\left\{\tilde{\textbf{r}}_i\right\}_{i=1}^{L}$ through finding a closest lattice point $\textbf{v}_i$ in $\mathcal{L}(\textbf{M}_{1i})$ to $\tilde{\textbf{r}}_i-\tilde{\textbf{r}}_1$, i.e.,
\begin{equation}\label{xxiao}
\textbf{v}_i=\argmin_{\textbf{v}\in\mathcal{L}(\textbf{M}_{1i})}\,\lVert\textbf{v}-(\tilde{\textbf{r}}_i-\tilde{\textbf{r}}_1)\rVert.
\end{equation}
Instead of accurately determining the folding vectors $\left\{\textbf{n}_i\right\}_{i=1}^L$ in \cite{md-crt}, we intend to accurately determine $\left\{\textbf{M}_i\textbf{n}_i\right\}_{i=1}^L$. Specifically, by taking the modulo-$\textbf{M}_i$ on both sides of the $(i-1)$-th equation in (\ref{ddon22}) for $2\leq i\leq L$, we have
\begin{equation}\label{haohao}
\left\{\begin{array}{ll}
\textbf{M}_1\textbf{n}_1\equiv \textbf{0}  \!\!\mod \textbf{M}_1\\
\textbf{M}_1\textbf{n}_1\equiv \textbf{r}_2-\textbf{r}_1  \!\!\mod \textbf{M}_2\\
\textbf{M}_1\textbf{n}_1\equiv \textbf{r}_3-\textbf{r}_1  \!\!\mod \textbf{M}_3\\
\:\;\;\;\;\;\;\;\;\;\vdots\\
\textbf{M}_1\textbf{n}_1\equiv \textbf{r}_L-\textbf{r}_1  \!\!\mod \textbf{M}_L,\\
\end{array}\right.
\end{equation}
where the first equation spontaneously holds. Once $\left\{\textbf{r}_i-\textbf{r}_1\right\}_{i=2}^{L}$ are accurately estimated from (\ref{xxiao}), i.e., $\textbf{v}_i=\textbf{r}_i-\textbf{r}_1$ for $2\leq i\leq L$, we can accurately determine $\textbf{M}_1\textbf{n}_1$ from (\ref{haohao}) according to the MD-CRT (see Proposition \ref{pro1} above), provided that $\textbf{M}_1\textbf{n}_1\in$
$\mathcal{N}\left(\text{lcrm}(\textbf{M}_1,\textbf{M}_2,\cdots,\textbf{M}_L)\right)$, equivalently written as $\lfloor\textbf{M}_1^{-1}\textbf{m}\rfloor\in$ $\mathcal{N}\left(\textbf{M}_1^{-1}\text{lcrm}(\textbf{M}_1,\textbf{M}_2,\cdots,\textbf{M}_L)\right)$.
Then, $\textbf{M}_i\textbf{n}_i$ can be accurately determined as $\textbf{M}_1\textbf{n}_1-\textbf{v}_i$ for each $2\leq i\leq L$. In this end, we derive the following lemma, which can be proved similarly to Theorem 3 in \cite{md-crt}.

\begin{lemma}\label{lema1}
Let moduli $\left\{\textbf{M}_i\right\}_{i=1}^{L}$ in (\ref{ddon}) be $L$ distinct arbitrary nonsingular integer matrices, and an integer vector $\textbf{m}$ be within the range
\begin{equation}\label{rrange}
\lfloor\textbf{M}_1^{-1}\textbf{m}\rfloor\in\mathcal{N}\left(\textbf{M}_1^{-1}\text{lcrm}(\textbf{M}_1,\textbf{M}_2,\cdots,\textbf{M}_L)\right).
\end{equation}
We can accurately determine $\{\textbf{M}_i\textbf{n}_i\}_{i=1}^{L}$ from the erroneous re-\\mainders $\left\{\tilde{\textbf{r}}_i\right\}_{i=1}^L$, if and only if
\begin{equation}\label{sn}
\textbf{0}=\argmin_{\textbf{h}\in\mathcal{L}(\textbf{M}_{1i})}\,\lVert\textbf{h}-(\triangle\textbf{r}_i-\triangle\textbf{r}_1)\rVert\; \text{ for }2\leq i\leq L.
\end{equation}
Moreover, letting $\tau$ be the remainder error bound, i.e., $\lVert\triangle\textbf{r}_i\rVert\leq\tau$ for $1\leq i\leq L$, a simple sufficient condition is
\begin{equation}\label{condition_suf2}
\tau<\min\limits_{2\leq i\leq L}\frac{\lambda_{\mathcal{L}(\textbf{M}_{1i})}}{4}.
\end{equation}
After $\{\textbf{M}_i\textbf{n}_i\}_{i=1}^L$ are accurately determined, a robust reconstruction $\tilde{\textbf{m}}$ of $\textbf{m}$ can be obtained by (\ref{rrr}) with $\lVert\tilde{\textbf{m}}-\textbf{m}\rVert\leq\tau$.
\end{lemma}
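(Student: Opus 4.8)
The plan is to split the equivalence into two linked facts and then handle the two quantitative claims separately. The first fact is that the closest-lattice-point estimates $\textbf{v}_i$ from (\ref{xxiao}) recover the true differences $\textbf{r}_i-\textbf{r}_1$ precisely when (\ref{sn}) holds; the second is that, given all $\textbf{v}_i=\textbf{r}_i-\textbf{r}_1$, the congruence system (\ref{haohao}) together with the range hypothesis (\ref{rrange}) determines $\{\textbf{M}_i\textbf{n}_i\}_{i=1}^L$ uniquely through the MD-CRT of Proposition \ref{pro1}.

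For the first fact I would use translation invariance of $\mathcal{L}(\textbf{M}_{1i})$. Since $\textbf{r}_i-\textbf{r}_1\in\mathcal{L}(\textbf{M}_{1i})$ and $\tilde{\textbf{r}}_i-\tilde{\textbf{r}}_1=(\textbf{r}_i-\textbf{r}_1)+(\triangle\textbf{r}_i-\triangle\textbf{r}_1)$, shifting the minimization in (\ref{xxiao}) by the lattice vector $\textbf{r}_i-\textbf{r}_1$ gives $\textbf{v}_i=(\textbf{r}_i-\textbf{r}_1)+\argmin_{\textbf{h}\in\mathcal{L}(\textbf{M}_{1i})}\lVert\textbf{h}-(\triangle\textbf{r}_i-\triangle\textbf{r}_1)\rVert$. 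Thus $\textbf{v}_i=\textbf{r}_i-\textbf{r}_1$ for all $i$ iff the inner argmin is $\textbf{0}$ for all $i$, which is exactly (\ref{sn}). The necessity direction of the lemma then follows from the algorithmic identity $\textbf{v}_i=\textbf{M}_1\textbf{n}_1-\textbf{M}_i\textbf{n}_i$ (compare (\ref{ddon22})): if every $\textbf{M}_i\textbf{n}_i$, including $i=1$, is recovered correctly, each $\textbf{v}_i$ must equal $\textbf{r}_i-\textbf{r}_1$, forcing (\ref{sn}).

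For the second fact I would read (\ref{haohao}) as a CRT system in the unknown $\textbf{M}_1\textbf{n}_1$ with moduli $\{\textbf{M}_i\}_{i=1}^L$ and remainders $\textbf{0},\textbf{r}_2-\textbf{r}_1,\dots,\textbf{r}_L-\textbf{r}_1$. Writing $\textbf{R}=\text{lcrm}(\textbf{M}_1,\dots,\textbf{M}_L)$ and $\textbf{n}_1=\lfloor\textbf{M}_1^{-1}\textbf{m}\rfloor$, the hypothesis (\ref{rrange}) says $\textbf{n}_1=\textbf{M}_1^{-1}\textbf{R}\textbf{x}$ with $\textbf{x}\in[0,1)^D$, so $\textbf{M}_1\textbf{n}_1=\textbf{R}\textbf{x}\in\mathcal{N}(\textbf{R})$. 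Since the system is consistent (the true $\textbf{M}_1\textbf{n}_1$ is a solution) and $\textbf{M}_1\textbf{n}_1\in\mathcal{N}(\textbf{R})$, Proposition \ref{pro1} recovers $\textbf{M}_1\textbf{n}_1$ uniquely, and then $\textbf{M}_i\textbf{n}_i=\textbf{M}_1\textbf{n}_1-\textbf{v}_i$ yields the remaining vectors. I expect this step to be the main obstacle: because the moduli are general and non-commutative, one cannot use a product form for $\textbf{R}$ and must argue uniqueness through the lcrm structure and the cascaded MD-CRT, which is exactly where the generality (versus the special moduli of Proposition \ref{pro2}) has to be accommodated.

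Finally, the two quantitative claims are routine. For the sufficient condition, $\lVert\triangle\textbf{r}_i-\triangle\textbf{r}_1\rVert\leq 2\tau$, so if $\tau<\lambda_{\mathcal{L}(\textbf{M}_{1i})}/4$ then any nonzero $\textbf{h}\in\mathcal{L}(\textbf{M}_{1i})$ satisfies $\lVert\textbf{h}-(\triangle\textbf{r}_i-\triangle\textbf{r}_1)\rVert\geq\lambda_{\mathcal{L}(\textbf{M}_{1i})}-2\tau>2\tau\geq\lVert\triangle\textbf{r}_i-\triangle\textbf{r}_1\rVert$, so $\textbf{0}$ is the strict minimizer and (\ref{sn}) holds for every $i$. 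For the error bound, substituting $\textbf{M}_i\textbf{n}_i+\textbf{r}_i=\textbf{m}$ into (\ref{rrr}) gives $\tilde{\textbf{m}}=\textbf{m}+\frac{1}{L}\sum_{i=1}^L\triangle\textbf{r}_i$, whence $\lVert\tilde{\textbf{m}}-\textbf{m}\rVert\leq\frac{1}{L}\sum_{i=1}^L\lVert\triangle\textbf{r}_i\rVert\leq\tau$.
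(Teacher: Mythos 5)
Your proof is correct, and its skeleton matches the paper's: the same lattice-translation step that turns the CVP (\ref{xxiao}) into a CVP on the error difference $\triangle\textbf{r}_i-\triangle\textbf{r}_1$, the same appeal to Proposition \ref{pro1} for sufficiency (together with the observation that (\ref{rrange}) is exactly $\textbf{M}_1\textbf{n}_1\in\mathcal{N}(\text{lcrm}(\textbf{M}_1,\cdots,\textbf{M}_L))$), the same triangle-inequality argument for the bound $\tau<\min_{2\leq i\leq L}\lambda_{\mathcal{L}(\textbf{M}_{1i})}/4$, and the same averaging estimate $\lVert\tilde{\textbf{m}}-\textbf{m}\rVert\leq\tau$ at the end. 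Where you genuinely depart from the paper is the necessity direction. The paper proves the contrapositive: assuming some $\textbf{h}_{k_0}\neq\textbf{0}$, it splits into two cases --- either some $\textbf{h}_{l_0}\notin\mathcal{L}(\textbf{M}_{l_0})$, so $\textbf{v}_{l_0}$ and $\textbf{r}_{l_0}-\textbf{r}_1$ disagree modulo $\textbf{M}_{l_0}$ and the CRT step cannot return $\textbf{M}_1\textbf{n}_1$; or every $\textbf{h}_i\in\mathcal{L}(\textbf{M}_i)$, so $\textbf{M}_1\textbf{n}_1$ is recovered correctly but then $\textbf{M}_{k_0}\textbf{n}_{k_0}=\textbf{M}_1\textbf{n}_1-\textbf{v}_{k_0}$ is necessarily wrong. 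You avoid this case analysis entirely by running the implication forward: exact recovery of all $\{\textbf{M}_i\textbf{n}_i\}_{i=1}^L$, combined with the algorithmic identity $\textbf{M}_i\tilde{\textbf{n}}_i=\textbf{M}_1\tilde{\textbf{n}}_1-\textbf{v}_i$ and (\ref{ddon22}), forces $\textbf{v}_i=\textbf{r}_i-\textbf{r}_1$ and hence $\textbf{h}_i=\textbf{0}$. This is shorter and logically equivalent to the paper's argument (each is the contrapositive of the other); what the paper's two cases buy is an explicit description of the failure mode in each regime, which the lemma itself does not require. Two shared caveats, which are not gaps relative to the paper: both proofs read ``can accurately determine'' as referring to the specific CVP-plus-MD-CRT procedure, and both gloss over tie-breaking in the argmin of (\ref{sn}); in fact your direct proof of the $\tau$ condition establishes that $\textbf{0}$ is the \emph{strict} minimizer, which is marginally cleaner than the paper's proof by contradiction.
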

\begin{proof}
From (\ref{xxiao}), we have, for $2\leq i\leq L$,
\begin{equation}\label{xxiaoxx}
\textbf{v}_i=\argmin_{\textbf{v}\in\mathcal{L}(\textbf{M}_{1i})}\,\lVert\textbf{v}-(\textbf{r}_i-\textbf{r}_1)-(\triangle\textbf{r}_i-\triangle\textbf{r}_1)\rVert.
\end{equation}
Due to $\textbf{v}\in\mathcal{L}(\textbf{M}_{1i})$ and $\textbf{r}_i-\textbf{r}_1\in\mathcal{L}(\textbf{M}_{1i})$, we have $\textbf{v}-(\textbf{r}_i-\textbf{r}_1)\in\mathcal{L}(\textbf{M}_{1i})$, and (\ref{xxiaoxx}) can be equivalently written as
\begin{equation}\label{xxiaoxx2}
\textbf{h}_i=\argmin_{\textbf{h}\in\mathcal{L}(\textbf{M}_{1i})}\,\lVert\textbf{h}-(\triangle\textbf{r}_i-\triangle\textbf{r}_1)\rVert
\end{equation}
by taking $\textbf{h}=\textbf{v}-(\textbf{r}_i-\textbf{r}_1)$.

We first prove the sufficiency of (\ref{sn}). If $\textbf{h}_i=\textbf{0}$ for $2\leq i\leq L$, we get $\textbf{v}_i=\textbf{r}_i-\textbf{r}_1$, i.e., $\left\{\textbf{r}_i-\textbf{r}_1\right\}_{i=2}^{L}$ are accurately obtained from (\ref{xxiao}). Hence, as mentioned before, $\{\textbf{M}_i\textbf{n}_i\}_{i=1}^{L}$ can be accurately determined, when (\ref{rrange}) satisfies.

We next prove the necessity of (\ref{sn}). Assume that there exists at least one $\textbf{h}_{k_0}$ that does not satisfy (\ref{sn}), i.e., $\textbf{h}_{k_0}\neq\textbf{0}$, for some $k_0$ with $2\leq k_0\leq L$. Furthermore, due to $\textbf{v}_{k_0}=\textbf{h}_{k_0}+(\textbf{r}_{k_0}-\textbf{r}_1)$, we know $\textbf{v}_{k_0}\neq\textbf{r}_{k_0}-\textbf{r}_1$. We then have the following two cases.

\textit{Case A:} $\textbf{h}_{l_0}\notin\mathcal{L}(\textbf{M}_{l_0})$ for some $l_0$ with $2\leq l_0\leq L$ (where $l_0$ is not necessarily equal to $k_0$), i.e., $\textbf{h}_{l_0}\neq\textbf{M}_{l_0}\textbf{n}$ for any $\textbf{n}\in\mathbb{Z}^D$. In this case, it is ready to see that $\textbf{v}_{l_0}$ and $\textbf{r}_{l_0}-\textbf{r}_1$ have different remainders modulo $\textbf{M}_{l_0}$. Thus, according to the uniqueness of the reconstruction in the MD-CRT, $\textbf{M}_1\textbf{n}_1$ cannot be accurately determined from $\left\{\textbf{v}_i\right\}_{i=1}^L$ in (\ref{haohao}).

\textit{Case B:} For each $2\leq i\leq L$, $\textbf{h}_{i}\in\mathcal{L}(\textbf{M}_{i})$, i.e., $\textbf{h}_{i}=\textbf{M}_{i}\textbf{n}$ for some $\textbf{n}\in\mathbb{Z}^D$. In this case,
considering that $\textbf{v}_{i}=\textbf{h}_{i}+(\textbf{r}_{i}-\textbf{r}_1)$, we know that $\textbf{v}_{i}$ and $\textbf{r}_{i}-\textbf{r}_1$ have the same remainders modulo $\textbf{M}_{i}$ for each $2\leq i\leq L$,
and therefore, $\textbf{M}_1\textbf{n}_1$ can be accurately determined from $\left\{\textbf{v}_i\right\}_{i=1}^L$ in (\ref{haohao}) using the MD-CRT. However, since $\textbf{v}_{k_0}\neq\textbf{r}_{k_0}-\textbf{r}_1$, the reconstruction of $\textbf{M}_{k_0}\textbf{n}_{k_0}$ as $\textbf{M}_1\textbf{n}_1-\textbf{v}_{k_0}$ is not accurate. This completes the proof of the necessity part.

Ultimately, we prove the simple sufficient condition in (\ref{condition_suf2}) for accurately determining $\{\textbf{M}_i\textbf{n}_i\}_{i=1}^{L}$. Assume that there exists one $\textbf{h}_{q_0}$ in (\ref{xxiaoxx2}) satisfying $\textbf{h}_{q_0}\neq\textbf{0}$ for some $q_0$ with $2\leq q_0\leq L$. We have
\begin{equation}
\begin{split}
\lVert\textbf{h}_{q_0}\rVert
&=\lVert\textbf{h}_{q_0}-(\triangle\textbf{r}_{q_0}-\triangle\textbf{r}_1)-(\textbf{0}-(\triangle\textbf{r}_{q_0}-\triangle\textbf{r}_1))\rVert\\
&\leq \lVert\textbf{h}_{q_0}-(\triangle\textbf{r}_{q_0}-\triangle\textbf{r}_1)\rVert+\lVert\triangle\textbf{r}_{q_0}-\triangle\textbf{r}_1\rVert \\
& \leq 2\lVert\triangle\textbf{r}_{q_0}-\triangle\textbf{r}_1\rVert \\
& \leq 4\tau < \lambda_{\mathcal{L}(\textbf{M}_{1{q_0}})}\,,
\end{split}
\end{equation}
in which the second inequality follows from the fact that $\textbf{h}_{q_0}$ is one closest lattice point in $\mathcal{L}(\textbf{M}_{1{q_0}})$ to $\triangle\textbf{r}_{q_0}-\triangle\textbf{r}_1$, and the last inequality holds since
$4\tau<\min_{2\leq i\leq L}\,\lambda_{\mathcal{L}(\textbf{M}_{1i})}\leq\lambda_{\mathcal{L}(\textbf{M}_{1{q_0}})}$. Hence, it contradicts with $\textbf{h}_{q_0}\in\mathcal{L}(\textbf{M}_{1{q_0}})$, i.e., $\lVert\textbf{h}_{q_0}\rVert\geq\lambda_{\mathcal{L}(\textbf{M}_{1{q_0}})}$, which indicates that the condition in (\ref{condition_suf2}) implies (\ref{sn}).

Once $\{\textbf{M}_i\textbf{n}_i\}_{i=1}^{L}$ are accurately determined, we have a robust reconstruction $\tilde{\textbf{m}}$ of $\textbf{m}$ as $\tilde{\textbf{m}}=\frac{1}{L}\sum_{i=1}^{L}\left(\textbf{M}_{i}\textbf{n}_{i}+\tilde{\textbf{r}}_{i}\right)$, i.e.,
\begin{equation}
\begin{split}
\lVert\tilde{\textbf{m}}-\textbf{m}\rVert& =\left\lVert\frac{1}{L}\sum_{i=1}^{L}\left(\textbf{M}_{i}\textbf{n}_{i}+\textbf{r}_{i}+\triangle\textbf{r}_i\right)-\textbf{m}\right\rVert\\
& =\left\lVert\frac{1}{L}\sum_{i=1}^{L}\triangle\textbf{r}_i\right\rVert\leq\frac{1}{L}\sum_{i=1}^{L}\left\lVert\triangle\textbf{r}_i\right\rVert\leq\tau.
\end{split}
\end{equation}
This completes the proof of the lemma.
\end{proof}

Note that in the aforementioned analysis, we just arbitrarily select the first equation (or the first remainder $\textbf{r}_1$) in (\ref{ddon11}) as a reference to be subtracted from the other equations to acquire (\ref{ddon22}), followed by Lemma \ref{lema1}. In fact, we can further improve the reconstruction robustness of the robust MD-CRT via selecting a proper reference equation in (\ref{ddon11}). Define $\textbf{M}_{ij}=\text{gcld}(\textbf{M}_i,\textbf{M}_j)$ for $1\leq i\neq j\leq L$. Find the index $l_0$ with $1\leq l_0\leq L$ such that
\begin{equation}\label{findd}
\min_{1\leq j\leq L \atop j\neq l_0}\lambda_{\mathcal{L}(\textbf{M}_{{l_0}j})}=\max_{1\leq i\leq L}\min_{1\leq j\leq L \atop j\neq i}\lambda_{\mathcal{L}(\textbf{M}_{ij})}.
\end{equation}
By treating the $l_0$-th remainder as the reference and following the above procedures utilized in Lemma \ref{lema1}, we obtain the result below straightforwardly, along with a closed-form reconstruction algorithm (see \textbf{Algorithm \ref{algo1}}) for the robust MD-CRT.

\begin{theorem}\label{theo1}
Let moduli $\left\{\textbf{M}_i\right\}_{i=1}^{L}$ in (\ref{ddon}) be $L$ different arbitrary nonsingular integer matrices.
Suppose that the index $l_0$ with $1\leq l_0\leq L$ satisfies (\ref{findd}). For an integer vector $\textbf{m}$ with
\begin{equation}\label{rrange11}
\lfloor\textbf{M}_{l_0}^{-1}\textbf{m}\rfloor\in\mathcal{N}\left(\textbf{M}_{l_0}^{-1}\text{lcrm}(\textbf{M}_1,\textbf{M}_2,\cdots,\textbf{M}_L)\right),
\end{equation}
we can accurately determine $\{\textbf{M}_i\textbf{n}_i\}_{i=1}^{L}$ from the erroneous rem-\\ainders $\left\{\tilde{\textbf{r}}_i\right\}_{i=1}^L$ by \textbf{Algorithm \ref{algo1}}, if and only if
\begin{equation}\label{sn11}
\textbf{0}=\argmin_{\textbf{h}\in\mathcal{L}(\textbf{M}_{{l_0}j})}\,\lVert\textbf{h}-(\triangle\textbf{r}_j-\triangle\textbf{r}_{l_0})\rVert\; \text{ for }1\leq j\leq L \text{ and }j\neq l_0.
\end{equation}
Moreover, letting $\tau$ be the remainder error bound, i.e., $\lVert\triangle\textbf{r}_i\rVert\leq\tau$ for $1\leq i\leq L$, a simple sufficient condition is
\begin{equation}\label{condition_suf211}
\tau<\max\limits_{1\leq i\leq L}\min\limits_{1\leq j\leq L \atop j\neq i}\frac{\lambda_{\mathcal{L}(\textbf{M}_{ij})}}{4}=\min\limits_{1\leq j\leq L \atop j\neq l_0}\frac{\lambda_{\mathcal{L}(\textbf{M}_{{l_0}j})}}{4}.
\end{equation}
After $\{\textbf{M}_i\textbf{n}_i\}_{i=1}^L$ are accurately determined, a robust reconstruction $\tilde{\textbf{m}}$ of $\textbf{m}$ can be obtained by (\ref{rrr}) with $\lVert\tilde{\textbf{m}}-\textbf{m}\rVert\leq\tau$.
\end{theorem}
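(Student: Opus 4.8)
The plan is to observe that Theorem~\ref{theo1} is essentially Lemma~\ref{lema1} with the reference index changed from $1$ to a judiciously chosen $l_0$, so the bulk of the work is a relabeling argument together with a justification that the optimized choice~(\ref{findd}) yields the best achievable sufficient bound. First I would note that Lemma~\ref{lema1} was stated and proved with the \emph{first} remainder $\textbf{r}_1$ playing the role of the reference, but nothing in that proof used the specific value $1$: the congruences in~(\ref{haohao}), the lattice estimation~(\ref{xxiao}), and the contradiction argument for the sufficient bound are all symmetric under permuting which equation is subtracted off. Hence I would simply rerun Lemma~\ref{lema1} verbatim with every occurrence of the index $1$ replaced by $l_0$: the range condition becomes~(\ref{rrange11}), the necessary-and-sufficient CVP condition becomes~(\ref{sn11}) over the lattices $\mathcal{L}(\textbf{M}_{l_0 j})$ for $j\neq l_0$, and the determination of $\textbf{M}_{l_0}\textbf{n}_{l_0}$ via the MD-CRT (Proposition~\ref{pro1}) followed by $\textbf{M}_j\textbf{n}_j=\textbf{M}_{l_0}\textbf{n}_{l_0}-\textbf{v}_j$ carries over unchanged.

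Next I would establish the sufficient condition~(\ref{condition_suf211}). Applying the chain of inequalities from the proof of Lemma~\ref{lema1} with reference $l_0$, any nonzero closest-lattice-point $\textbf{h}_{q_0}\in\mathcal{L}(\textbf{M}_{l_0 q_0})$ satisfies $\lVert\textbf{h}_{q_0}\rVert\le 4\tau$, so the hypothesis $\tau<\min_{j\neq l_0}\lambda_{\mathcal{L}(\textbf{M}_{l_0 j})}/4\le\lambda_{\mathcal{L}(\textbf{M}_{l_0 q_0})}/4$ forces $\lVert\textbf{h}_{q_0}\rVert<\lambda_{\mathcal{L}(\textbf{M}_{l_0 q_0})}$, contradicting $\textbf{h}_{q_0}\neq\textbf{0}$ and thereby implying~(\ref{sn11}). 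The key extra point is the equality in~(\ref{condition_suf211}): by the definition of $l_0$ in~(\ref{findd}), the quantity $\min_{j\neq l_0}\lambda_{\mathcal{L}(\textbf{M}_{l_0 j})}$ equals the outer maximum $\max_i\min_{j\neq i}\lambda_{\mathcal{L}(\textbf{M}_{ij})}$, so the two expressions are literally the same number; this is a direct substitution rather than a new estimate, and it records that the choice~(\ref{findd}) maximizes the guaranteed error tolerance over all possible reference indices.

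Finally, I would close with the reconstruction bound exactly as in Lemma~\ref{lema1}: once all $\{\textbf{M}_i\textbf{n}_i\}_{i=1}^L$ are correctly determined, substituting $\tilde{\textbf{r}}_i=\textbf{r}_i+\triangle\textbf{r}_i$ into~(\ref{rrr}) telescopes the true $\textbf{m}=\textbf{M}_i\textbf{n}_i+\textbf{r}_i$ terms, leaving $\tilde{\textbf{m}}-\textbf{m}=\frac1L\sum_{i=1}^L\triangle\textbf{r}_i$, whence $\lVert\tilde{\textbf{m}}-\textbf{m}\rVert\le\frac1L\sum_i\lVert\triangle\textbf{r}_i\rVert\le\tau$ by the triangle inequality. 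The main thing requiring care — the only place where this is more than pure bookkeeping — is verifying that the symmetry claim genuinely holds: I would check that relabeling the reference does not disturb the range condition~(\ref{rrange11}), since $\textbf{M}_{l_0}^{-1}\text{lcrm}(\textbf{M}_1,\ldots,\textbf{M}_L)$ is indeed an integer matrix (the lcrm is a left multiple of every modulus, in particular of $\textbf{M}_{l_0}$), so $\mathcal{N}(\cdot)$ is well defined and the MD-CRT applies with $\textbf{M}_{l_0}\textbf{n}_{l_0}$ ranging over the correct fundamental domain. I expect the relabeling itself to be routine; the genuine (if modest) obstacle is presenting~(\ref{findd}) as an \emph{optimization} over references and cleanly arguing that it attains the best sufficient bound, which I would phrase as: for any fixed reference $i$ the guaranteed tolerance from Lemma~\ref{lema1} is $\min_{j\neq i}\lambda_{\mathcal{L}(\textbf{M}_{ij})}/4$, and $l_0$ is by construction the maximizer of this over $i$, so no other reference yields a larger certified robustness radius.
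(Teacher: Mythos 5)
Your proposal is correct and follows essentially the same route as the paper: the paper proves Lemma~\ref{lema1} with reference index $1$ and then obtains Theorem~\ref{theo1} ``straightforwardly'' by treating the $l_0$-th remainder chosen via~(\ref{findd}) as the reference, exactly the relabeling-plus-optimization argument you describe. Your added checks (that the equality in~(\ref{condition_suf211}) is a direct substitution from~(\ref{findd}), and that $\textbf{M}_{l_0}^{-1}\text{lcrm}(\textbf{M}_1,\ldots,\textbf{M}_L)$ is an integer matrix so the range condition is well posed) are consistent with, and slightly more explicit than, what the paper states.
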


\begin{algorithm}[!h]
\caption{}

\vspace{1mm}
\begin{algorithmic}[1]\label{algo1}
\STATE According to 12) in Sec. \ref{sec2}, calculate $\textbf{M}_{l_0 j}=\text{gcld}(\textbf{M}_{l_0},\textbf{M}_{j})$ for $1\leq j\leq L$ and $j\neq l_0$.

\STATE According to 13) in Sec. \ref{sec2}, calculate $\textbf{R}_3=\text{lcrm}(\textbf{M}_1,\textbf{M}_2)$, $\textbf{R}_4=\text{lcrm}(\textbf{M}_1,\textbf{M}_2,\textbf{M}_3)=\text{lcrm}(\textbf{R}_3,\textbf{M}_3)$,
$\textbf{R}_5=\text{lcrm}(\textbf{M}_1,$ $\textbf{M}_2,\textbf{M}_3,\textbf{M}_4)=\text{lcrm}(\textbf{R}_4,\textbf{M}_4)$, $\cdots\cdots$, $\textbf{R}=\textbf{R}_{L+1}=\text{lcrm}($ $\textbf{M}_1,\textbf{M}_2, \cdots,\textbf{M}_L)=\text{lcrm}(\textbf{R}_L,\textbf{M}_L)$.

\STATE According to 3) in Sec. \ref{sec2}, from the given $\left\{\tilde{\textbf{r}}_i\right\}_{i=1}^L$, calculate $\textbf{v}_j$ for $1\leq j\leq L$ and $j\neq l_0$ as
\begin{equation}\label{xxiao22}
\textbf{v}_j=\argmin_{\textbf{v}\in\mathcal{L}(\textbf{M}_{{l_0}j})}\,\lVert\textbf{v}-(\tilde{\textbf{r}}_j-\tilde{\textbf{r}}_{l_0})\rVert.
\end{equation}

\STATE Calculate $\textbf{M}_{l_0}\tilde{\textbf{n}}_{l_0}\in\mathcal{N}(\textbf{R})=\mathcal{N}(\text{lcrm}(\textbf{M}_1,\textbf{M}_2, \cdots,\textbf{M}_L))$ via the cascaded reconstruction
algorithm for the MD-CRT in Proposition \ref{pro1} from the following system of congruences
\begin{equation}\label{haohao22}
\left\{\begin{array}{ll}
\textbf{M}_{l_0}\tilde{\textbf{n}}_{l_0}\equiv \textbf{v}_1  \!\!\mod \textbf{M}_1\\
\:\;\;\;\;\;\;\;\;\;\vdots\\
\textbf{M}_{l_0}\tilde{\textbf{n}}_{l_0}\equiv \textbf{v}_{l_0-1}  \!\!\mod \textbf{M}_{l_0-1}\\
\textbf{M}_{l_0}\tilde{\textbf{n}}_{l_0}\equiv \textbf{0}  \!\!\mod \textbf{M}_{l_0}\\
\textbf{M}_{l_0}\tilde{\textbf{n}}_{l_0}\equiv \textbf{v}_{l_0+1}  \!\!\mod \textbf{M}_{l_0+1}\\
\:\;\;\;\;\;\;\;\;\;\vdots\\
\textbf{M}_{l_0}\tilde{\textbf{n}}_{l_0}\equiv \textbf{v}_L  \!\!\mod \textbf{M}_L.\\
\end{array}\right.
\end{equation}

\STATE Calculate $\textbf{M}_{j}\tilde{\textbf{n}}_{j}=\textbf{M}_{l_0}\tilde{\textbf{n}}_{l_0}-\textbf{v}_j$ for $1\leq j\leq L$ and $j\neq l_0$. Then, a reconstruction of $\textbf{m}$ is $\tilde{\textbf{m}}=\frac{1}{L}\sum_{i=1}^{L}(\textbf{M}_{i}\tilde{\textbf{n}}_{i}+\tilde{\textbf{r}}_{i})$.
\end{algorithmic}
\end{algorithm}

\textit{Remark}: When the moduli $\left\{\textbf{M}_i\right\}_{i=1}^{L}$ in Theorem \ref{theo1} satisfy the constraint (i.e., (\ref{specmod})) imposed in \cite{md-crt}, Theorem \ref{theo1} reduces to Proposition \ref{pro2}. It should also be pointed out that the MD-CRT reconstruction range $\textbf{m}\in\mathcal{N}\left(\text{lcrm}(\textbf{M}_1,\textbf{M}_2,\cdots,\textbf{M}_L)\right)$ and the robust MD-CRT reconstruction range in (\ref{rrange11}) do not imply ea-\\ch other, unless for the (robust) $1$-D CRT and the (robust) MD-CRT with moduli being nonsingular diagonal integer matrices. We take an example as follows. Let $\textbf{M}_1=\left(
                  \begin{array}{cc}
                    1 & 3 \\
                    3 & 1 \\
                  \end{array}
                \right)$ and $\textbf{M}_2=\left(
                  \begin{array}{cc}
                    1 & 2 \\
                    2 & 1 \\
                  \end{array}
                \right)$, whose product $\textbf{R}=\textbf{M}_1\textbf{M}_2=\left(
                  \begin{array}{cc}
                    7 & 5 \\
                    5 & 7 \\
                  \end{array}
                \right)$ is their lcrm. When $\textbf{m}=\left(
                  \begin{array}{c}
                    5  \\
                    4  \\
                  \end{array}
                \right)=\textbf{R}\left(
                  \begin{array}{c}
                    5/8  \\
                    1/8 \\
                  \end{array}
                \right)\in\mathcal{N}(\textbf{R})$, we obtain $\textbf{n}_1=\left(
                  \begin{array}{c}
                    0  \\
                    1 \\
                  \end{array}
                \right)$ \\$=\textbf{M}_2\left(
                  \begin{array}{c}
                    2/3  \\
                    -1/3 \\
                  \end{array}
                \right)$, indicating $\textbf{n}_1\notin\mathcal{N}(\textbf{M}_1^{-1}\textbf{R})=\mathcal{N}(\textbf{M}_2)$. On the other hand, when $\textbf{m}=\left(
                  \begin{array}{c}
                    10  \\
                    9  \\
                  \end{array}
                \right)=\textbf{R}\left(
                  \begin{array}{c}
                    25/24  \\
                    13/24 \\
                  \end{array}
                \right)\notin\mathcal{N}(\textbf{R})$, we get
                $\textbf{n}_1=\left(
                  \begin{array}{c}
                    2  \\
                    2 \\
                  \end{array}
                \right)=\textbf{M}_2\left(
                  \begin{array}{c}
                    2/3  \\
                    2/3 \\
                  \end{array}
                \right)$, implying $\textbf{n}_1\in\mathcal{N}(\textbf{M}_1^{-1}\textbf{R})=\mathcal{N}(\textbf{M}_2)$.
Owing to this reconstruction range inequivalence, we cannot obtain a further improved variant of the robust MD-CRT as in  \cite{xiaoxia1}, where a multi-stage (e.g., second-stage) robust 1-D CRT was generalized by first splitting the congruences into several groups, then applying the robust 1-D CRT to each group inde-\\
pendently, and finally applying the robust 1-D CRT again to a new system of congruences with the reconstructions and lcm's in all the groups being the remainders and moduli, respectively.

For a better understanding of Theorem \ref{theo1}, we next present an example to explain our implementation of the robust MD-CRT through the step-by-step procedures in \textbf{Algorithm \ref{algo1}}.

\begin{example}\label{exam1}
Consider $L=3$ moduli $\textbf{M}_1=\left(
                  \begin{array}{cc}
                    5850 & 9000 \\
                    2580 & 2940 \\
                  \end{array}
                \right)$, $\textbf{M}_2=\left(
                  \begin{array}{cc}
                    28950 & 24150 \\
                    14140 & 11680 \\
                  \end{array}
                \right)$, and $\textbf{M}_3=\left(
                  \begin{array}{cc}
                    3440 & 3460 \\
                    1540 & 1160 \\
                  \end{array}
                \right)$. Let $\textbf{m}=\left(
                  \begin{array}{c}
                    -5365350  \\
                    -2402280  \\
                  \end{array}
                \right)$, then the remainders of $\textbf{m}$ modulo $\left\{\textbf{M}_i\right\}_{i=1}^3$ can be calculated from (\ref{remaindercal}) as $\textbf{r}_1=\left(
                  \begin{array}{c}
                    0  \\
                    0  \\
                  \end{array}
                \right)$, $\textbf{r}_2=\left(
                  \begin{array}{c}
                    37650  \\
                    18320  \\
                  \end{array}
                \right)$, and  $\textbf{r}_3=\left(
                  \begin{array}{c}
                    4490  \\
                    1660  \\
                  \end{array}
                \right)$. Correspondingly, the folding vectors are given by $\textbf{n}_1=\left(
                  \begin{array}{c}
                    -971  \\
                    35  \\
                  \end{array}
                \right)$, $\textbf{n}_2=\left(
                  \begin{array}{c}
                    1390  \\
                    -1890  \\
                  \end{array}
                \right)$, and $\textbf{n}_3=\left(
                  \begin{array}{c}
                    -1561 \\
                    0  \\
                  \end{array}
                \right)$. Let the erroneous remainders be $\tilde{\textbf{r}}_1=\left(
                  \begin{array}{c}
                    52 \\
                    36  \\
                  \end{array}
                \right)$, $\tilde{\textbf{r}}_2=\left(
                  \begin{array}{c}
                    37673 \\
                    18243  \\
                  \end{array}
                \right)$, and $\tilde{\textbf{r}}_3=\left(
                  \begin{array}{c}
                    4446 \\
                    1610  \\
                  \end{array}
                \right)$, with their respective remainder errors $\triangle\textbf{r}_1=\left(
                  \begin{array}{c}
                    52 \\
                    36  \\
                  \end{array}
                \right)$, $\triangle\textbf{r}_2=\left(
                  \begin{array}{c}
                    23 \\
                    -77  \\
                  \end{array}
                \right)$, and $\triangle\textbf{r}_3=\left(
                  \begin{array}{c}
                    -44 \\
                    -50  \\
                  \end{array}
                \right)$. In the following, we elaborate how to robustly reconstruct $\textbf{m}$ from the erroneous remainders $\left\{\tilde{\textbf{r}}_i\right\}_{i=1}^3$ by \textbf{Algorithm \ref{algo1}}.

$\romannumeral1)$ First, calculate $\textbf{M}_{12}=\left(
                  \begin{array}{cc}
                    -2272650 & -2274600 \\
                    -1002640 & -1003500 \\
                  \end{array}
                \right)$, $\textbf{M}_{13}=$
                $\left(
                  \begin{array}{cc}
                    -604610 & -454920 \\
                    -266740 & -200700 \\
                  \end{array}
                \right)$,
                 $\textbf{M}_{23}=\left(
                  \begin{array}{cc}
                    -3632710 & -3661660 \\
                    -1774320 & -1788460 \\
                  \end{array}
                \right)$, according to 12) in Sec. \ref{sec2}. Under the $\ell_2$ norm, we then obtain $\lambda_{\mathcal{L}(\textbf{M}_{12})}=637.89$, $\lambda_{\mathcal{L}(\textbf{M}_{13})}=352.28$,  $\lambda_{\mathcal{L}(\textbf{M}_{23})}=178.04$. Finally, from (\ref{findd}), we regard the first remainder as the reference, and the reconstruction robustness bound is $352.28/4=88.07$. One can easily see that
                 the remainder error bound $\tau$ satisfies $\lVert\triangle\textbf{r}_i\rVert\leq\tau<88.07$ for $1\leq i\leq3$.

$\romannumeral2)$ According to 13) in Sec. \ref{sec2}, calculate $\textbf{R}_3=\text{lcrm}(\textbf{M}_1,\textbf{M}_2)$
$=\left(
                  \begin{array}{cc}
                    86850 & -101250 \\
                    42420 & -49800 \\
                  \end{array}
                \right)$,
                followed by $\textbf{R}=\text{lcrm}(
                \textbf{M}_1,\textbf{M}_2,\textbf{M}_3)$
                $=\text{lcrm}(\textbf{R}_3,\textbf{M}_3)
                =
                \left(
                  \begin{array}{cc}
                    774000 & -6133500 \\
                    346500 & -2746200 \\
                  \end{array}
                \right)$. In addition, based on 12) in Sec. \ref{sec2}, we calculate the accompanying matrices $\textbf{P}_2$
                $=\left(
                  \begin{array}{cc}
                    -10 & -12 \\
                    -69 & -69 \\
                  \end{array}
                \right)$ and $\textbf{Q}_2=\left(
                  \begin{array}{cc}
                    -25 & -28 \\
                    -36 & -32 \\
                  \end{array}
                \right)$ satisfying $\textbf{M}_1\textbf{P}_2+\textbf{M}_2\textbf{Q}_2=\textbf{M}_{12}$, and calculate the accompanying matrices $\textbf{P}_3$
                $=\left(
                  \begin{array}{cc}
                    -108 & -65 \\
                    85 & 51 \\
                  \end{array}
                \right)$ and $\textbf{Q}_3=\left(
                  \begin{array}{cc}
                    -40 & -24 \\
                    -45 & -27 \\
                  \end{array}
                \right)$ satisfying $\textbf{R}_3\textbf{P}_3+\textbf{M}_3\textbf{Q}_3=\textbf{G}_3\triangleq\text{gcld}(\textbf{R}_3,\textbf{M}_3)=\left(
                  \begin{array}{cc}
                    -18279350 & -10984980 \\
                    -8928160 & -5365380 \\
                  \end{array}
                \right)$.

$\romannumeral3)$ According to 3) in Sec. \ref{sec2}, calculate $\textbf{v}_2$ and $\textbf{v}_3$ from (\ref{xxiao22}) as $\textbf{v}_2=\left(
                  \begin{array}{c}
                    37650 \\
                    18320  \\
                  \end{array}
                \right)$ and $\textbf{v}_3=\left(
                  \begin{array}{c}
                    4490 \\
                    1660  \\
                  \end{array}
                \right)$. One can easily confirm that $\lfloor\textbf{M}_{1}^{-1}\textbf{m}\rfloor=\textbf{n}_1\in\mathcal{N}\left(\textbf{M}_{1}^{-1}\textbf{R}\right)=\mathcal{N}\left(\left(
                  \begin{array}{cc}
                    140 & -1110 \\
                    -5 & 40 \\
                  \end{array}
                \right)\right)$, i.e., $\left(
                  \begin{array}{c}
                    -971  \\
                    35  \\
                  \end{array}
                \right)=\left(
                  \begin{array}{cc}
                    140 & -1110 \\
                    -5 & 40 \\
                  \end{array}
                \right)\left(
                  \begin{array}{c}
                    0.2  \\
                    0.9  \\
                  \end{array}
                \right)$, and $\lVert\triangle\textbf{r}_i\rVert\leq\tau<88.07$ for $1\leq i\leq 3$. Therefore, Theorem \ref{theo1} holds.

$\romannumeral4)$ Via the cascaded reconstruction algorithm for the MD-CRT in Proposition \ref{pro1}, calculate $\bm{\zeta}\triangleq\textbf{M}_1\tilde{\textbf{n}}_1$ from
\begin{equation}
\left\{\begin{array}{ll}
\bm{\zeta}\equiv \textbf{0}  \!\!\mod \textbf{M}_1\\
\bm{\zeta}\equiv \textbf{v}_2  \!\!\mod \textbf{M}_2\\
\bm{\zeta}\equiv \textbf{v}_3 \!\!\mod \textbf{M}_3.\\
\end{array}\right.
\end{equation}
\begin{itemize}
  \item From (\ref{firststage}), we acquire $\bm{\zeta}_1=\left\langle\textbf{0}+\textbf{M}_1\textbf{P}_{2}\textbf{M}_{12}^{-1}(\textbf{v}_2-\textbf{0})\right\rangle_{\textbf{R}_{3}}= \left(
                  \begin{array}{c}
                    -20250  \\
                    -9960  \\
                  \end{array}
                \right)$.
  \item From (\ref{secondstage}), we get
  $\bm{\zeta}=\bm{\zeta}_2=\left\langle\bm{\zeta}_1+\textbf{R}_3\textbf{P}_{3}\textbf{G}_{3}^{-1}(\textbf{v}_3-\bm{\zeta}_1)\right\rangle_{\textbf{R}}=\left(
                  \begin{array}{c}
                    -5365350  \\
                    -2402280  \\
                  \end{array}
                \right)$.
\end{itemize}
We so have $\tilde{\textbf{n}}_1=\left(
                  \begin{array}{c}
                    -971  \\
                    35  \\
                  \end{array}
                \right)$, which is equal to $\textbf{n}_1$.

$\romannumeral5)$ Calculate $\textbf{M}_{2}\tilde{\textbf{n}}_{2}=\textbf{M}_{1}\tilde{\textbf{n}}_{1}-\textbf{v}_2=\left(
                  \begin{array}{c}
                    -5403000  \\
                    -2420600  \\
                  \end{array}
                \right)$ as well as $\textbf{M}_{3}\tilde{\textbf{n}}_{3}=\textbf{M}_{1}\tilde{\textbf{n}}_{1}-\textbf{v}_3=\left(
                  \begin{array}{c}
                    -5369840  \\
                    -2403940  \\
                  \end{array}
                \right)$. It also implies that $\tilde{\textbf{n}}_{2}=$ $\left(
                  \begin{array}{c}
                    1390  \\
                    -1890  \\
                  \end{array}
                \right)$ and $\tilde{\textbf{n}}_{3}=\left(
                  \begin{array}{c}
                    -1561  \\
                    0 \\
                  \end{array}
                \right)$, which are equal to $\textbf{n}_2$ and $\textbf{n}_3$, respectively.
               Therefore, $\left\{\textbf{M}_i\textbf{n}_i\right\}_{i=1}^3$ (i.e., $\left\{\textbf{n}_i\right\}_{i=1}^3$) are accurately determined from the erroneous remainders $\left\{\tilde{\textbf{r}}_i\right\}_{i=1}^3$, and a robust reconstruction of $\textbf{m}$ can be obtained as $\tilde{\textbf{m}}=\frac{1}{3}\sum_{i=1}^{3}(\textbf{M}_{i}\tilde{\textbf{n}}_{i}+\tilde{\textbf{r}}_{i})=\left(
                  \begin{array}{c}
                    -5365339.67  \\
                    -2402310.33 \\
                  \end{array}
                \right)$, i.e., $\lVert\tilde{\textbf{m}}-\textbf{m}\rVert_2=32.05\leq\tau<88.07$. \hfill $\blacksquare$
\end{example}

Since in the above new results in Theorem \ref{theo1} there is no any constraint on moduli $\left\{\textbf{M}_i\right\}_{i=1}^{L}$ (i.e., moduli $\left\{\textbf{M}_i\right\}_{i=1}^{L}$ are arbitrary nonsingular integer matrices), some of these moduli might be redundant with respect to the reconstruction robustness bound (i.e., (\ref{condition_suf211})), while retaining the reconstruction range (i.e., (\ref{rrange11})). We investigate the case when there exists a pair of moduli $\textbf{M}_{i_1}$ and $\textbf{M}_{i_2}$ such that $\textbf{M}_{i_1}$ $=\textbf{M}_{i_2}\textbf{P}$ for $\textbf{P}\in\mathbb{Z}^{D\times D}$, i.e., $\textbf{M}_{i_1}$ is a right multiple of $\textbf{M}_{i_2}$. For this, we have the following corollary.

\begin{corollary}\label{cor2}
If there are two moduli $\textbf{M}_{i_1}$ and $\textbf{M}_{i_2}$ in $\left\{\textbf{M}_i\right\}_{i=1}^{L}$ in Theorem \ref{theo1} such that $\textbf{M}_{i_1}$ $=\textbf{M}_{i_2}\textbf{P}$ for $\textbf{P}\in\mathbb{Z}^{D\times D}$, the modulus $\textbf{M}_{i_2}$ is redundant, in the sense that the appearance of $\textbf{M}_{i_2}$ does not help increase (and might even decrease) the reconstruction robustness bound,
meanwhile keeping the reconstruction range unchanged. As such, $\textbf{M}_{i_2}$ can be deleted from the set of moduli in this case.
\end{corollary}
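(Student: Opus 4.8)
The plan is to treat the two assertions separately—invariance of the reconstruction range and non-decrease of the robustness bound under deletion of $\textbf{M}_{i_2}$—and to reduce both to one divisibility lemma about the paired gcld's. Concretely, I would first establish that for every index $j\neq i_1,i_2$ one has $\lambda_{\mathcal{L}(\textbf{M}_{i_1 j})}\geq\lambda_{\mathcal{L}(\textbf{M}_{i_2 j})}$, where $\textbf{M}_{i_1 j}=\text{gcld}(\textbf{M}_{i_1},\textbf{M}_j)$ and $\textbf{M}_{i_2 j}=\text{gcld}(\textbf{M}_{i_2},\textbf{M}_j)$. Since $\textbf{M}_{i_1}=\textbf{M}_{i_2}\textbf{P}$, the matrix $\textbf{M}_{i_2}$ is a left divisor of $\textbf{M}_{i_1}$. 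By definition $\textbf{M}_{i_2 j}$ is a left divisor of $\textbf{M}_{i_2}$, so by transitivity of left divisibility it is a left divisor of $\textbf{M}_{i_1}$; it is also a left divisor of $\textbf{M}_j$. Thus $\textbf{M}_{i_2 j}$ is a common left divisor of $\textbf{M}_{i_1}$ and $\textbf{M}_j$, and by the universal property of the gcld it left-divides $\textbf{M}_{i_1 j}$, i.e.\ $\textbf{M}_{i_1 j}=\textbf{M}_{i_2 j}\textbf{T}$ for some integer matrix $\textbf{T}$. Hence $\mathcal{L}(\textbf{M}_{i_1 j})\subseteq\mathcal{L}(\textbf{M}_{i_2 j})$, and since every nonzero vector of a sublattice is a nonzero vector of the ambient lattice, the minimum-distance inequality follows.

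For the reconstruction range I would show that deleting $\textbf{M}_{i_2}$ leaves the lcrm invariant. Because $\textbf{M}_{i_1}=\textbf{M}_{i_2}\textbf{P}$, any common right multiple $\textbf{R}=\textbf{M}_{i_1}\textbf{S}$ of the reduced family $\{\textbf{M}_i\}_{i\neq i_2}$ also satisfies $\textbf{R}=\textbf{M}_{i_2}(\textbf{P}\textbf{S})$, so it is automatically a right multiple of $\textbf{M}_{i_2}$. Therefore the common right multiples of $\{\textbf{M}_i\}_{i=1}^{L}$ coincide with those of $\{\textbf{M}_i\}_{i\neq i_2}$, which gives $\text{lcrm}(\textbf{M}_1,\dots,\textbf{M}_L)=\text{lcrm}(\{\textbf{M}_i\}_{i\neq i_2})$. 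Selecting the reference index from the reduced family then leaves the range (\ref{rrange11}) unchanged, since that range is determined by the lcrm and its cardinality equals $|\det(\text{lcrm})|$.

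For the robustness bound, write $B(\mathcal{S})=\max_{i\in\mathcal{S}}\min_{j\in\mathcal{S},\,j\neq i}\lambda_{\mathcal{L}(\textbf{M}_{ij})}/4$ and set $\mathcal{S}=\{1,\dots,L\}$, $\mathcal{S}'=\mathcal{S}\setminus\{i_2\}$; the goal is $B(\mathcal{S}')\geq B(\mathcal{S})$. Let $l_0$ attain the outer maximum in $B(\mathcal{S})$. If $l_0\neq i_2$, then $l_0\in\mathcal{S}'$ and its inner minimum, now taken over the smaller set $\mathcal{S}'\setminus\{l_0\}$, can only grow, so $B(\mathcal{S}')\geq B(\mathcal{S})$ immediately. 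If $l_0=i_2$, I would instead take $i_1$ as the reference in $\mathcal{S}'$ and chain the lemma with monotonicity of the minimum:
\begin{equation}
\begin{aligned}
B(\mathcal{S}')&\geq\frac14\min_{j\in\mathcal{S}',\,j\neq i_1}\lambda_{\mathcal{L}(\textbf{M}_{i_1 j})}\geq\frac14\min_{j\in\mathcal{S}',\,j\neq i_1}\lambda_{\mathcal{L}(\textbf{M}_{i_2 j})}\\
&\geq\frac14\min_{j\in\mathcal{S},\,j\neq i_2}\lambda_{\mathcal{L}(\textbf{M}_{i_2 j})}=B(\mathcal{S}),
\end{aligned}
\end{equation}
where the second inequality is the lemma and the third enlarges the index set to include $j=i_1$. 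In both cases $B(\mathcal{S}')\geq B(\mathcal{S})$, so keeping $\textbf{M}_{i_2}$ cannot raise the bound while it may strictly lower it, which is exactly the claimed redundancy.

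I expect the main obstacle to be the divisibility lemma: its clean statement hides the need to invoke the universal property of the gcld together with transitivity of left divisibility in precisely the right order, and to observe that although a gcld is only unique up to right multiplication by a unimodular matrix $\textbf{U}$, the generated lattice is unaffected because $\mathcal{L}(\textbf{A}\textbf{U})=\mathcal{L}(\textbf{A})$, so $\lambda_{\mathcal{L}(\textbf{M}_{ij})}$ is well defined. Once the lemma is in place, the lcrm invariance and the min--max bookkeeping in the bound argument are comparatively routine.
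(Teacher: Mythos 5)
Your gcld-divisibility lemma and your min--max analysis of the robustness bound are correct and essentially the paper's own argument: the paper likewise observes that $\textbf{M}_{i_2 j}$ is a cld of $\textbf{M}_{i_1}$ and $\textbf{M}_j$ (because $\textbf{M}_{i_1}=\textbf{M}_{i_2}\textbf{P}$), hence left-divides the gcld $\textbf{M}_{i_1 j}$, so $\mathcal{L}(\textbf{M}_{i_1 j})\subseteq\mathcal{L}(\textbf{M}_{i_2 j})$ and $\lambda_{\mathcal{L}(\textbf{M}_{i_1 j})}\geq\lambda_{\mathcal{L}(\textbf{M}_{i_2 j})}$, and it then bounds the two pieces of the max over references exactly as your two cases do. Your proof that the lcrm is unchanged (the sets of common right multiples of the full and reduced families coincide) is in fact spelled out more completely than in the paper, which merely asserts that equality.

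The gap is in the range-invariance claim. The range in (\ref{rrange11}) is \emph{not} ``determined by the lcrm'': it is the set $\left\{\textbf{m}\,:\,\lfloor\textbf{M}_{l_0}^{-1}\textbf{m}\rfloor\in\mathcal{N}\left(\textbf{M}_{l_0}^{-1}\textbf{R}\right)\right\}$, which depends on which modulus serves as the reference, and the paper's remark following Theorem \ref{theo1} gives an explicit two-modulus example showing that ranges attached to different references are genuinely different subsets of $\mathbb{Z}^D$ (equal cardinality $|\det(\textbf{R})|$ does not give equal sets). So to conclude that deleting $\textbf{M}_{i_2}$ keeps the range unchanged, you must show the \emph{same} reference index can be used both before and after deletion. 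That is precisely the step the paper supplies and you omit: from your lemma, together with $\lambda_{\mathcal{L}(\textbf{M}_{i_1 i_2})}=\lambda_{\mathcal{L}(\textbf{M}_{i_2 i_1})}$, one gets $\min_{j\neq i_1}\lambda_{\mathcal{L}(\textbf{M}_{i_1 j})}\geq\min_{j\neq i_2}\lambda_{\mathcal{L}(\textbf{M}_{i_2 j})}$, so $\textbf{M}_{i_2}$ never needs to be chosen as reference in (\ref{findd}); keeping that same reference $\textbf{M}_{l_0}$ with $l_0\neq i_2$ for the reduced family, the unchanged lcrm then makes the range formula literally identical. Your own Case $l_0=i_2$ in the bound argument exposes the issue: there you switch the reference to $i_1$, and if such a switch were ever forced, the range would change, contradicting the statement being proved. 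The repair is short (and your Case $l_0=i_2$ becomes unnecessary once $l_0\neq i_2$ is established), but it is a required step, not routine bookkeeping.
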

\begin{proof}
Without loss of generality, let us assume that $\textbf{M}_{1}=\textbf{M}_{L}\textbf{P}$ for $\textbf{P}\in\mathbb{Z}^{D\times D}$. We first prove $\lambda_{\mathcal{L}(\textbf{M}_{1j})}\geq\lambda_{\mathcal{L}(\textbf{M}_{Lj})}$ for any $2\leq j\leq L-1$. Since $\textbf{M}_{Lj}=\text{gcld}(\textbf{M}_{L},\textbf{M}_{j})$ and $\textbf{M}_{1}=\textbf{M}_{L}\textbf{P}$ for any $2\leq j\leq L-1$, it is ready to confirm that $\textbf{M}_{Lj}$ is a cld of $\textbf{M}_{1}$ and $\textbf{M}_{j}$. Therefore, $\textbf{M}_{Lj}$ is a left divisor of $\textbf{M}_{1j}$ from the definition of gcld, i.e., $\textbf{M}_{1j}=\textbf{M}_{Lj}\textbf{Q}_j$ for $\textbf{Q}_j\in\mathbb{Z}^{D\times D}$. That is to say, $\mathcal{L}(\textbf{M}_{1j})\subseteq\mathcal{L}(\textbf{M}_{Lj})$, and so $\lambda_{\mathcal{L}(\textbf{M}_{1j})}\geq\lambda_{\mathcal{L}(\textbf{M}_{Lj})}$.

For the set of moduli $\left\{\textbf{M}_i\right\}_{i=1}^{L-1}$, let $s$ denote the reconstruction robustness bound, i.e., $s=\max_{1\leq i\leq L-1}\min_{1\leq j\leq L-1 \atop j\neq i}\lambda_{\mathcal{L}(\textbf{M}_{ij})}/4$. For the set of moduli $\left\{\textbf{M}_i\right\}_{i=1}^{L}$, the reconstruction robustness bound can be expressed as
\begin{equation}\label{qiang}
\max_{1\leq i\leq L}\min_{1\leq j\leq L \atop j\neq i}\frac{\lambda_{\mathcal{L}(\textbf{M}_{ij})}}{4}=\max\bigg\{\underbrace{\max_{1\leq i\leq L-1}\min_{1\leq j\leq L \atop j\neq i}\frac{\lambda_{\mathcal{L}(\textbf{M}_{ij})}}{4}}_{(a)}, \underbrace{\min_{1\leq j\leq L-1}\frac{\lambda_{\mathcal{L}(\textbf{M}_{Lj})}}{4}}_{(b)}\bigg\}.
\end{equation}
As for $(a)$, due to $\min_{1\leq j\leq L \atop j\neq i}\lambda_{\mathcal{L}(\textbf{M}_{ij})}/4\leq\min_{1\leq j\leq L-1 \atop j\neq i}\lambda_{\mathcal{L}(\textbf{M}_{ij})}/4$, we have $\max_{1\leq i\leq L-1}\min_{1\leq j\leq L \atop j\neq i}\lambda_{\mathcal{L}(\textbf{M}_{ij})}/4\leq s$. As for $(b)$, since it has been proved above that $\lambda_{\mathcal{L}(\textbf{M}_{1j})} \geq\lambda_{\mathcal{L}(\textbf{M}_{Lj})}$ for any $2\leq j\leq L-1$, we have
\begin{equation}
\min_{1\leq j\leq L-1}\frac{\lambda_{\mathcal{L}(\textbf{M}_{Lj})}}{4}\leq\min_{2\leq j\leq L-1}\frac{\lambda_{\mathcal{L}(\textbf{M}_{Lj})}}{4}\leq\min_{2\leq j\leq L-1}\frac{\lambda_{\mathcal{L}(\textbf{M}_{1j})}}{4}\leq s.
\end{equation}
Thus, from (\ref{qiang}), we get $\max_{1\leq i\leq L}\min_{1\leq j\leq L \atop j\neq i}\lambda_{\mathcal{L}(\textbf{M}_{ij})}/4\leq s$, which suggests that the appearance of $\textbf{M}_{L}$ does not help increase the reconstruction robustness bound and might even worsen it.

For the set of moduli $\left\{\textbf{M}_i\right\}_{i=1}^{L}$, it is straightforward that $\textbf{M}_L$ is impossible to be a reference modulus (i.e., $l_0\neq L$ in Theorem \ref{theo1}), on account of $\lambda_{\mathcal{L}(\textbf{M}_{1j})}\geq\lambda_{\mathcal{L}(\textbf{M}_{Lj})}$ for any $2\leq j\leq L-1$. So, for the set of moduli $\left\{\textbf{M}_i\right\}_{i=1}^{L-1}$, we can choose the same $\textbf{M}_{l_0}$ as the reference modulus. Furthermore, owing to $\textbf{M}_1=\textbf{M}_L\textbf{P}$, we get $\text{lcrm}(\textbf{M}_1,\textbf{M}_2,\cdots,\textbf{M}_L)=\text{lcrm}(\textbf{M}_1,\textbf{M}_2,\cdots,\textbf{M}_{L-1})$, which implies from (\ref{rrange11}) that the reconstruction range remains uncha-
nged after deleting $\textbf{M}_L$ from moduli $\left\{\textbf{M}_i\right\}_{i=1}^{L}$.
\end{proof}

Going back to the necessary and sufficient condition in (\ref{sn11}) for the robust MD-CRT in Theorem \ref{theo1}, one can readily see that the remainder error difference bound depends on $\lambda_{\mathcal{L}(\textbf{M}_{{l_0}j})}$, i.e.,
\begin{equation}
\lVert\triangle\textbf{r}_j-\triangle\textbf{r}_{l_0}\rVert<\frac{\lambda_{\mathcal{L}(\textbf{M}_{{l_0}j})}}{2},
\end{equation}
for $1\leq j\leq L$ and $j\neq l_0$. It means that if we let $\tau_i$ denote the remainder error bound for the $i$-th remainder, i.e., $\lVert\triangle\textbf{r}_i\rVert\leq\tau_i$, for $1\leq i\leq L$, then $\{\tau_i\}_{i=1}^L$ will have different requirements for the robust reconstruction of $\textbf{m}$ in (\ref{rrange11}), as stated below.

\begin{corollary}\label{cor3}
Let moduli $\left\{\textbf{M}_i\right\}_{i=1}^{L}$ in (\ref{ddon}) be $L$ different arbitrary nonsingular integer matrices,
the index $l_0$ with $1\leq l_0\leq L$ satisfy (\ref{findd}), and an integer vector $\textbf{m}$ be with (\ref{rrange11}), as the same as those in Theorem \ref{theo1}. Let $\tau_i$ denote the remainder error bound for the $i$-th remainder, i.e., $\lVert\triangle\textbf{r}_i\rVert\leq\tau_i$, for $1\leq i\leq L$, among which the remainder error bound $\tau_{l_0}$ for the reference modulus $\textbf{M}_{l_0}$ is given by $\tau_{l_0}<\min_{1\leq j\leq L \atop j\neq l_0}\,\lambda_{\mathcal{L}(\textbf{M}_{{l_0}j})}/4$. If the remainder error bound $\tau_i$ for $1\leq i\leq L$ and $i\neq l_0$ satisfies
\begin{equation}
\lVert\triangle\textbf{r}_i\rVert\leq\tau_i\leq\frac{\lambda_{\mathcal{L}(\textbf{M}_{{l_0}i})}}{2}-\min_{1\leq j\leq L \atop j\neq l_0}\frac{\lambda_{\mathcal{L}(\textbf{M}_{{l_0}j})}}{4},
\end{equation}
we can accurately determine $\{\textbf{M}_i\textbf{n}_i\}_{i=1}^{L}$ from the erroneous rem-\\ainders $\left\{\tilde{\textbf{r}}_i\right\}_{i=1}^L$ by \textbf{Algorithm \ref{algo1}}, and therefore, a robust reconstruction $\tilde{\textbf{m}}$ of $\textbf{m}$ is obtained by (\ref{rrr}), i.e., $\lVert\tilde{\textbf{m}}-\textbf{m}\rVert\leq\sum_{i=1}^{L}\tau_i/L$.
\end{corollary}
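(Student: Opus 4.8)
The plan is to reduce the statement to a direct application of Theorem~\ref{theo1}. The hypotheses here match those of Theorem~\ref{theo1} exactly --- the same reference index $l_0$ satisfying (\ref{findd}), the same reconstruction range (\ref{rrange11}), and the same reconstruction procedure in Algorithm~\ref{algo1} --- so the only thing that differs is that the uniform error bound $\tau$ is replaced by non-uniform per-remainder budgets $\{\tau_i\}_{i=1}^L$. Consequently it suffices to verify that these budgets force the necessary and sufficient condition (\ref{sn11}) to hold for every $j\neq l_0$; everything else is then inherited from Theorem~\ref{theo1}. The entire argument thus hinges on translating the per-remainder bounds into the pairwise ``$\textbf{0}$ is the unique closest lattice point'' statement.

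First I would record the lattice-geometry fact underlying the remainder-error-difference bound stated just before this corollary: for the lattice $\mathcal{L}(\textbf{M}_{{l_0}j})$, the origin $\textbf{0}$ is the unique closest lattice point to a displacement $\textbf{d}$ whenever $\lVert\textbf{d}\rVert < \lambda_{\mathcal{L}(\textbf{M}_{{l_0}j})}/2$. Indeed, any nonzero lattice point $\textbf{h}$ satisfies $\lVert\textbf{h}\rVert \geq \lambda_{\mathcal{L}(\textbf{M}_{{l_0}j})}$ by the definition of the minimum distance, so the triangle inequality gives $\lVert\textbf{h}-\textbf{d}\rVert \geq \lVert\textbf{h}\rVert - \lVert\textbf{d}\rVert > \lambda_{\mathcal{L}(\textbf{M}_{{l_0}j})}/2 > \lVert\textbf{d}\rVert = \lVert\textbf{0}-\textbf{d}\rVert$, which is precisely (\ref{sn11}) for that $j$ with $\textbf{d} = \triangle\textbf{r}_j - \triangle\textbf{r}_{l_0}$.

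The crux is then a single triangle-inequality estimate combined with the way the budgets are allocated. For each $j\neq l_0$ I would bound $\lVert\triangle\textbf{r}_j - \triangle\textbf{r}_{l_0}\rVert \leq \lVert\triangle\textbf{r}_j\rVert + \lVert\triangle\textbf{r}_{l_0}\rVert \leq \tau_j + \tau_{l_0}$, and substitute the two stated budgets: $\tau_{l_0} < \min_{k\neq l_0}\lambda_{\mathcal{L}(\textbf{M}_{{l_0}k})}/4$ for the reference, and $\tau_j \leq \lambda_{\mathcal{L}(\textbf{M}_{{l_0}j})}/2 - \min_{k\neq l_0}\lambda_{\mathcal{L}(\textbf{M}_{{l_0}k})}/4$ for the others. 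Adding these, the common term $\min_{k\neq l_0}\lambda_{\mathcal{L}(\textbf{M}_{{l_0}k})}/4$ cancels, leaving $\tau_j + \tau_{l_0} < \lambda_{\mathcal{L}(\textbf{M}_{{l_0}j})}/2$. By the geometric fact above, (\ref{sn11}) holds for every $j\neq l_0$, so Theorem~\ref{theo1} applies verbatim and Algorithm~\ref{algo1} accurately determines $\{\textbf{M}_i\textbf{n}_i\}_{i=1}^L$.

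Finally I would obtain the reconstruction error bound by the same averaging computation used in Lemma~\ref{lema1} and Theorem~\ref{theo1}: once $\{\textbf{M}_i\textbf{n}_i\}_{i=1}^L$ are exact, substituting $\tilde{\textbf{r}}_i = \textbf{r}_i + \triangle\textbf{r}_i$ into (\ref{rrr}) gives $\tilde{\textbf{m}} - \textbf{m} = \frac{1}{L}\sum_{i=1}^L \triangle\textbf{r}_i$, whence $\lVert\tilde{\textbf{m}} - \textbf{m}\rVert \leq \frac{1}{L}\sum_{i=1}^L \lVert\triangle\textbf{r}_i\rVert \leq \frac{1}{L}\sum_{i=1}^L \tau_i$. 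I do not expect a genuine obstacle here, since the result is essentially a refinement of Theorem~\ref{theo1} with a cleverly split error budget. The only point requiring care is the per-index bookkeeping: one must confirm that the cancellation yields a \emph{strict} inequality $\lVert\triangle\textbf{r}_j - \triangle\textbf{r}_{l_0}\rVert < \lambda_{\mathcal{L}(\textbf{M}_{{l_0}j})}/2$ for each $j$ separately --- so that $\textbf{0}$ is the \emph{unique} minimizer rather than merely a minimizer --- which in turn relies on the reference budget $\tau_{l_0}$ entering with a strict inequality.
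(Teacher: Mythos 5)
Your proposal is correct and follows essentially the same route as the paper's own proof: a triangle-inequality bound $\lVert\triangle\textbf{r}_j-\triangle\textbf{r}_{l_0}\rVert\leq\tau_j+\tau_{l_0}<\lambda_{\mathcal{L}(\textbf{M}_{{l_0}j})}/2$ (strict because the reference budget $\tau_{l_0}$ is strict), which yields condition (\ref{sn11}), followed by the same averaging computation for $\lVert\tilde{\textbf{m}}-\textbf{m}\rVert\leq\frac{1}{L}\sum_{i=1}^{L}\tau_i$. The only difference is that you spell out the lattice-geometry fact (a displacement of norm below half the minimum distance has $\textbf{0}$ as its unique closest lattice point), which the paper asserts without proof in the remark preceding the corollary.
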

\begin{proof}
As $\lVert\triangle\textbf{r}_{l_0}\rVert\leq\tau_{l_0}<\min_{1\leq j\leq L \atop j\neq l_0}\,\lambda_{\mathcal{L}(\textbf{M}_{{l_0}j})}/4$ and $\lVert\triangle\textbf{r}_i\rVert\leq\tau_i\leq\lambda_{\mathcal{L}(\textbf{M}_{{l_0}i})}/2-\min_{1\leq j\leq L \atop j\neq l_0}\lambda_{\mathcal{L}(\textbf{M}_{{l_0}j})}/4$ for $1\leq i\leq L$ and $i\neq l_0$, we have
\begin{equation}
\lVert\triangle\textbf{r}_j-\triangle\textbf{r}_{l_0}\rVert\leq\lVert\triangle\textbf{r}_j\rVert+\lVert\triangle\textbf{r}_{l_0}\rVert\leq\tau_{l_0}+\tau_i<\frac{\lambda_{\mathcal{L}(\textbf{M}_{{l_0}j})}}{2},
\end{equation}
which indicates (\ref{sn11}) in Theorem \ref{theo1}. As a result, $\{\textbf{M}_i\textbf{n}_i\}_{i=1}^{L}$ can be accurately determined from the erroneous remainders $\left\{\tilde{\textbf{r}}_i\right\}_{i=1}^L$ by \textbf{Algorithm \ref{algo1}}, and we can obtain a robust reconstruction $\tilde{\textbf{m}}$ of $\textbf{m}$ as $\tilde{\textbf{m}}=\frac{1}{L}\sum_{i=1}^{L}\left(\textbf{M}_{i}\textbf{n}_{i}+\tilde{\textbf{r}}_{i}\right)$, i.e.,
\begin{equation}
\lVert\tilde{\textbf{m}}-\textbf{m}\rVert =\left\lVert\frac{1}{L}\sum_{i=1}^{L}\triangle\textbf{r}_i\right\rVert\leq\frac{1}{L}\sum_{i=1}^{L}\left\lVert\triangle\textbf{r}_i\right\rVert\leq\frac{1}{L}\sum_{i=1}^{L}\tau_i.
\end{equation}
Therefore, Corollary \ref{cor3} is proved.
\end{proof}

\textit{Remark}: Of note, owing to $\lambda_{\mathcal{L}(\textbf{M}_{{l_0}i})}/2-\min_{1\leq j\leq L \atop j\neq l_0}\lambda_{\mathcal{L}(\textbf{M}_{{l_0}j})}/4\geq\min_{1\leq j\leq L \atop j\neq l_0}\lambda_{\mathcal{L}(\textbf{M}_{{l_0}j})}/4$ for $1\leq i\leq L$ and $i\neq l_0$, the allowed remain-\\der error bounds we derived by approaching them individually as above are larger than or equal to that in (\ref{condition_suf211}) for all the rema-\\inder errors in Theorem \ref{theo1}, while the reconstruction range (i.e., (\ref{rrange11})) remains unchanged. In addition, note that the counterpart results of Corollary \ref{cor2} and Corollary \ref{cor3} were also obtained for the robust 1-D CRT in \cite{xiaoxia1}.

\begin{example}\label{exam2}
Let us consider the $L=3$ moduli as in Example \ref{exam1}. According to Corollary \ref{cor3}, for the robust MD-CRT, we obtain the remainder error bounds as $\tau_1<352.28/4$, $\tau_2\leq923.5/4, \tau_3$\\ $\leq352.28/4$. One can obviously see that the allowed remainder error bounds here are larger than or equal to $352.28/4$ obtained in Theorem \ref{theo1}. Moreover, the reconstruction range in Corollary \ref{cor3} is the same as that (i.e., (\ref{rrange11})) in Theorem \ref{theo1}.
\hfill $\blacksquare$
\end{example}

\section{Generalization of Robust MD-CRT from Integer Vectors/Matrices to Real Ones}\label{sec5}
The above studies are all for integer vectors/matrices. Considering that in practical applications, an unknown vector (e.g., the phase of interest in multi-dimensional phase unwrapping in MIMO radar systems) is real-valued in general, we next ge-\\neralize the robust MD-CRT results in Theorem \ref{theo1} from integer vectors/matrices to real ones in this section. Note that we adopt boldfaced \textsf{Sans-Serif} letters to denote real vectors/matrices for distinguishing them from integer vectors/matrices.

Let $\textbf{\textsf{m}}$ be a $D$-dimensional real vector (i.e., $\textbf{\textsf{m}}\in\mathbb{R}^{D}$), which can be uniquely expressed as
\begin{equation}\label{bbb}
\textbf{\textsf{m}}=\textbf{\textsf{M}}\bm{\Psi}_i\textbf{n}_i+\textbf{\textsf{r}}_i\,\text{ for }1\leq i\leq L,
\end{equation}
where $\left\{\bm{\Psi}_i\right\}_{i=1}^L\in\mathbb{Z}^{D\times D}$ are known nonsingular integer matrices, $\textbf{\textsf{M}}\in\mathbb{R}^{D\times D}$ is a known nonsingular real matrix, and $\left\{\textbf{n}_i\right\}_{i=1}^L\in\mathbb{Z}^D$ are unknown integer vectors (or folding vectors). In particular,
$\left\{\textbf{\textsf{r}}_i\right\}_{i=1}^L\in\mathbb{R}^D$ are real vectors with $\textbf{\textsf{r}}_i\in\mathcal{F}(\textbf{\textsf{M}}\bm{\Psi}_i)$ for each $1\leq i\leq L$, which are real-valued versions of the previously mentioned integer remainders $\left\{\textbf{r}_i\right\}_{i=1}^L$ in (\ref{ddon11}). Here, $\mathcal{F}(\textbf{\textsf{M}}\bm{\Psi}_i)$ is termed the fundamental parallelepiped of $\mathcal{L}(\textbf{\textsf{M}}\bm{\Psi}_i)$, defined as
\begin{equation}
\mathcal{F}(\textbf{\textsf{M}}\bm{\Psi}_i)=\left\{\textbf{\textsf{M}}\bm{\Psi}_i\textbf{\textsf{x}}\,|\,\textbf{\textsf{x}}\in[0,1)^{D}\right\}.
\end{equation}
The volume of $\mathcal{F}(\textbf{\textsf{M}}\bm{\Psi}_i)$ equals $|\text{det}(\textbf{\textsf{M}}\bm{\Psi}_i)|$ \cite{PPV4}. $\mathcal{F}(\textbf{\textsf{M}}\bm{\Psi}_i)$ does not comprise any other lattice points in $\mathcal{L}(\textbf{\textsf{M}}\bm{\Psi}_i)$, except for the origin $\textbf{0}$. One can easily see that $\mathcal{F}(\textbf{\textsf{M}}\bm{\Psi}_i)$ and its shifted copies (i.e., $\mathcal{F}(\textbf{\textsf{M}}\bm{\Psi}_i)+\textbf{\textsf{v}}$ for any nonzero $\textbf{\textsf{v}}\in\mathcal{L}(\textbf{\textsf{M}}\bm{\Psi}_i)$) constitute the whole real vector space $\mathbb{R}^D$.

Let us define $\bm{\Psi}_{ij}=\text{gcld}(\bm{\Psi}_i,\bm{\Psi}_j)$ for $1\leq i\neq j\leq L$. Without loss of generality, we assume that $\bm{\Psi}_1$ satisfies
\begin{equation}
\min_{2\leq j\leq L}\lambda_{\mathcal{L}(\textbf{\textsf{M}}\bm{\Psi}_{1j})}=\max_{1\leq i\leq L}\min_{1\leq j\leq L \atop j\neq i}\lambda_{\mathcal{L}(\textbf{\textsf{M}}\bm{\Psi}_{ij})}.
\end{equation}
By treating $\textbf{\textsf{M}}\bm{\Psi}_1$ as the reference and following the operations used in (\ref{ddon22}) and (\ref{ddon33}), we have, from (\ref{bbb}),
\begin{equation}\label{cccd}
\left\{\begin{array}{ll}
\bm{\Psi}_1\textbf{n}_1-\bm{\Psi}_2\textbf{n}_2=\textbf{\textsf{M}}^{-1}(\textbf{\textsf{r}}_2-\textbf{\textsf{r}}_1)\vspace{0.5ex} \\
\bm{\Psi}_1\textbf{n}_1-\bm{\Psi}_3\textbf{n}_3=\textbf{\textsf{M}}^{-1}(\textbf{\textsf{r}}_3-\textbf{\textsf{r}}_1)\\
\:\;\;\;\;\:\;\;\;\;\:\;\;\;\;\:\;\;\;\;\:\;\;\vdots\\
\bm{\Psi}_1\textbf{n}_1-\bm{\Psi}_L\textbf{n}_L=\textbf{\textsf{M}}^{-1}(\textbf{\textsf{r}}_L-\textbf{\textsf{r}}_1)
\end{array}\right.
\end{equation}
and
\begin{equation}\label{ccc}
\left\{\begin{array}{ll}
\textbf{K}_{12}\textbf{n}_1-\textbf{K}_{21}\textbf{n}_2=\left(\textbf{\textsf{M}}\bm{\Psi}_{12}\right)^{-1}(\textbf{\textsf{r}}_2-\textbf{\textsf{r}}_1)\vspace{0.5ex} \\
\textbf{K}_{13}\textbf{n}_1-\textbf{K}_{31}\textbf{n}_3=\left(\textbf{\textsf{M}}\bm{\Psi}_{13}\right)^{-1}(\textbf{\textsf{r}}_3-\textbf{\textsf{r}}_1)\\
\:\;\;\;\;\:\;\;\;\;\:\;\;\;\;\:\;\;\;\;\:\;\;\;\;\vdots\\
\textbf{K}_{1L}\textbf{n}_1-\textbf{K}_{L1}\textbf{n}_L=\left(\textbf{\textsf{M}}\bm{\Psi}_{1L}\right)^{-1}(\textbf{\textsf{r}}_L-\textbf{\textsf{r}}_1),
\end{array}\right.
\end{equation}
in which $\textbf{K}_{1j}=\bm{\Psi}^{-1}_{1j}\bm{\Psi}_1$ and $\textbf{K}_{j1}=\bm{\Psi}^{-1}_{1j}\bm{\Psi}_j$ for $2\leq j\leq L$. From (\ref{cccd}) and (\ref{ccc}), $\left\{\textbf{\textsf{M}}^{-1}(\textbf{\textsf{r}}_i-\textbf{\textsf{r}}_1)\right\}_{i=2}^L$ and $\left\{\left(\textbf{\textsf{M}}\bm{\Psi}_{1i}\right)^{-1}(\textbf{\textsf{r}}_i-\textbf{\textsf{r}}_1)\right\}_{i=2}^L$ are all integer vectors; that is,
\begin{equation}
\textbf{\textsf{r}}_i-\textbf{\textsf{r}}_1\in\mathcal{L}(\textbf{\textsf{M}}\bm{\Psi}_{1i})\,\text{ for }2\leq i\leq L.
\end{equation}

For every $2\leq i\leq L$, we then estimate $\textbf{\textsf{r}}_i-\textbf{\textsf{r}}_1$ from the known erroneous remainders $\left\{\tilde{\textbf{\textsf{r}}}_i\right\}_{i=1}^{L}$ via finding a closest lattice point $\textbf{\textsf{v}}_i$ in $\mathcal{L}(\textbf{\textsf{M}}\bm{\Psi}_{1i})$ to $\tilde{\textbf{\textsf{r}}}_i-\tilde{\textbf{\textsf{r}}}_1$, i.e.,
\begin{equation}\label{xiaoxiao}
\textbf{\textsf{v}}_i=\argmin_{\textbf{\textsf{v}}\in\mathcal{L}(\textbf{\textsf{M}}\bm{\Psi}_{1i})}\,\lVert\textbf{\textsf{v}}-(\tilde{\textbf{\textsf{r}}}_i-\tilde{\textbf{\textsf{r}}}_1)\rVert,
\end{equation}
where $\tilde{\textbf{\textsf{r}}}_j\triangleq\textbf{\textsf{r}}_j+\triangle\textbf{\textsf{r}}_j\in\mathcal{F}(\textbf{\textsf{M}}\bm{\Psi}_j)$ for each $1\leq j\leq L$ is defined, and $\left\{\triangle\textbf{\textsf{r}}_i\right\}_{i=1}^{L}\in\mathbb{R}^{D}$ are the remainder errors.
We try to accurately determine $\left\{\bm{\Psi}_i\textbf{n}_i\right\}_{i=1}^L$. Specifically, we take the modulo-$\bm{\Psi}_i$ on both sides of the $(i-1)$-th equation in (\ref{cccd}) for $2\leq i\leq L$, and we have
\begin{equation}\label{yaogao}
\left\{\begin{array}{ll}
\bm{\Psi}_1\textbf{n}_1\equiv \textbf{0}  \!\!\mod \bm{\Psi}_1\\
\bm{\Psi}_1\textbf{n}_1\equiv \textbf{\textsf{M}}^{-1}(\textbf{\textsf{r}}_2-\textbf{\textsf{r}}_1)  \!\!\mod \bm{\Psi}_2\\
\bm{\Psi}_1\textbf{n}_1\equiv \textbf{\textsf{M}}^{-1}(\textbf{\textsf{r}}_3-\textbf{\textsf{r}}_1)  \!\!\mod \bm{\Psi}_3\\
\:\;\;\;\;\;\;\;\;\;\vdots\\
\bm{\Psi}_1\textbf{n}_1\equiv \textbf{\textsf{M}}^{-1}(\textbf{\textsf{r}}_L-\textbf{\textsf{r}}_1)  \!\!\mod \bm{\Psi}_L,\\
\end{array}\right.
\end{equation}
where the first equation spontaneously holds. Once $\left\{\textbf{\textsf{r}}_i-\textbf{\textsf{r}}_1\right\}_{i=2}^{L}$ are accurately estimated from (\ref{xiaoxiao}), i.e., $\textbf{\textsf{v}}_i=\textbf{\textsf{r}}_i-\textbf{\textsf{r}}_1$ for $2\leq i\leq L$, we can accurately determine $\bm{\Psi}_1\textbf{n}_1$ from (\ref{yaogao}) according to the MD-CRT (see Proposition \ref{pro1} above), provided that $\bm{\Psi}_1\textbf{n}_1\in$
$\mathcal{N}\left(\text{lcrm}(\bm{\Psi}_1,\bm{\Psi}_2,\cdots,\bm{\Psi}_L)\right)$, equivalently written as $\textbf{\textsf{M}}\bm{\Psi}_1\textbf{n}_1\in$
$\mathcal{F}\left(\textbf{\textsf{M}}\,\text{lcrm}(\bm{\Psi}_1,\bm{\Psi}_2,\cdots,\bm{\Psi}_L)\right)$, and also as
\begin{equation}
\lfloor\bm{\Psi}_1^{-1}\textbf{\textsf{M}}^{-1}\textbf{\textsf{m}}\rfloor\in\mathcal{N}\left(\bm{\Psi}_1^{-1}\text{lcrm}(\bm{\Psi}_1,\bm{\Psi}_2,\cdots,\bm{\Psi}_L)\right).
\end{equation}
Next, $\bm{\Psi}_i\textbf{n}_i$ can be accurately determined from (\ref{cccd}) as $\bm{\Psi}_1\textbf{n}_1-\textbf{\textsf{M}}^{-1}\textbf{\textsf{v}}_i$ for each $2\leq i\leq L$. One can see that the proposed robust MD-CRT for integer vectors/matrices (i.e., Theorem \ref{theo1}) and its
closed-form reconstruction algorithm (i.e., \textbf{Algorithm \ref{algo1}}) can be directly applied to (\ref{cccd}) (or (\ref{yaogao})). Thus, the following result is straightforwardly obtained.

\begin{corollary}
Let $\left\{\bm{\Psi}_i\right\}_{i=1}^{L}$ and $\textbf{\textsf{M}}$ in (\ref{bbb}) be $L$ different arbitrary nonsingular integer matrices and an arbitrary nonsingular real matrix, respectively.
Without loss of generality, we assume that the index $l_0$ with $1\leq l_0\leq L$ satisfies
\begin{equation}
\min_{1\leq j\leq L \atop j\neq l_0}\lambda_{\mathcal{L}(\textbf{\textsf{M}}\bm{\Psi}_{l_0j})}=\max_{1\leq i\leq L}\min_{1\leq j\leq L \atop j\neq i}\lambda_{\mathcal{L}(\textbf{\textsf{M}}\bm{\Psi}_{ij})}.
\end{equation}
For a real vector $\textbf{\textsf{m}}$ with
\begin{equation}
\lfloor\bm{\Psi}_{l_0}^{-1}\textbf{\textsf{M}}^{-1}\textbf{\textsf{m}}\rfloor\in\mathcal{N}\left(\bm{\Psi}_{l_0}^{-1}\text{lcrm}(\bm{\Psi}_1,\bm{\Psi}_2,\cdots,\bm{\Psi}_L)\right),
\end{equation}
we can accurately determine $\left\{\bm{\Psi}_i\textbf{n}_i\right\}_{i=1}^L$ from the erroneous rem-\\ainders $\left\{\tilde{\textbf{\textsf{r}}}_i\right\}_{i=1}^{L}$, if and only if
\begin{equation}
\textbf{0}=\argmin_{\textbf{\textsf{h}}\in\mathcal{L}(\textbf{\textsf{M}}\bm{\Psi}_{l_0j})}\lVert\textbf{\textsf{h}}-(\triangle\textbf{\textsf{r}}_j-\triangle\textbf{\textsf{r}}_{l_0})\rVert\; \text{ for }1\leq j\leq L \text{ and }j\neq l_0.
\end{equation}
Moreover, letting $\tau$ be the remainder error bound, i.e., $\lVert\triangle\textbf{\textsf{r}}_i\rVert\leq\tau$ for $1\leq i\leq L$, a simple sufficient condition is
\begin{equation}
\tau<\max\limits_{1\leq i\leq L}\min\limits_{1\leq j\leq L \atop j\neq i}\frac{\lambda_{\mathcal{L}(\textbf{\textsf{M}}\bm{\Psi}_{ij})}}{4}=\min\limits_{1\leq j\leq L \atop j\neq l_0}\frac{\lambda_{\mathcal{L}(\textbf{\textsf{M}}\bm{\Psi}_{l_0j})}}{4}.
\end{equation}
After $\left\{\bm{\Psi}_i\textbf{n}_i\right\}_{i=1}^L$ are accurately determined, a robust reconstruction $\tilde{\textbf{\textsf{m}}}$ of $\textbf{\textsf{m}}$ can be obtained by $\tilde{\textbf{\textsf{m}}}=\frac{1}{L}\sum_{i=1}^{L}\left(\textbf{\textsf{M}}\bm{\Psi}_i\textbf{n}_{i}+\tilde{\textbf{\textsf{r}}}_{i}\right)$ with $\lVert\tilde{\textbf{\textsf{m}}}-\textbf{\textsf{m}}\rVert\leq\tau$.
\end{corollary}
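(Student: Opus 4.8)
The plan is to reduce the real-valued statement to the integer-valued robust MD-CRT already established in Theorem \ref{theo1}, with the nonsingular real matrix $\textbf{\textsf{M}}$ playing a purely auxiliary role. Choosing the reference to be $\textbf{\textsf{M}}\bm{\Psi}_{l_0}$ (the index selected by the max-min criterion, exactly as Theorem \ref{theo1} refines Lemma \ref{lema1} by optimizing over the reference), the preparatory manipulations leading to (\ref{cccd})--(\ref{yaogao}) have already recast the problem into the desired form: the unknowns $\left\{\bm{\Psi}_i\textbf{n}_i\right\}_{i=1}^L$ are \emph{integer} vectors, the differences satisfy $\textbf{\textsf{r}}_j-\textbf{\textsf{r}}_{l_0}\in\mathcal{L}(\textbf{\textsf{M}}\bm{\Psi}_{l_0 j})$, and $\bm{\Psi}_{l_0}\textbf{n}_{l_0}$ obeys the integer congruence system (\ref{yaogao}). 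Thus the whole argument of Lemma \ref{lema1} and Theorem \ref{theo1} can be replayed with the congruence moduli $\textbf{M}_i$ replaced by $\bm{\Psi}_i$ and the closest-vector lattices $\mathcal{L}(\textbf{M}_{1i})$ replaced by $\mathcal{L}(\textbf{\textsf{M}}\bm{\Psi}_{l_0 j})$.

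First I would verify the error transformation. Decomposing $\tilde{\textbf{\textsf{r}}}_j-\tilde{\textbf{\textsf{r}}}_{l_0}=(\textbf{\textsf{r}}_j-\textbf{\textsf{r}}_{l_0})+(\triangle\textbf{\textsf{r}}_j-\triangle\textbf{\textsf{r}}_{l_0})$ and using $\textbf{\textsf{r}}_j-\textbf{\textsf{r}}_{l_0}\in\mathcal{L}(\textbf{\textsf{M}}\bm{\Psi}_{l_0 j})$, the substitution $\textbf{\textsf{h}}=\textbf{\textsf{v}}-(\textbf{\textsf{r}}_j-\textbf{\textsf{r}}_{l_0})$ turns the CVP (\ref{xiaoxiao}) into the search for the closest lattice point to $\triangle\textbf{\textsf{r}}_j-\triangle\textbf{\textsf{r}}_{l_0}$, exactly as in (\ref{xxiaoxx})--(\ref{xxiaoxx2}). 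Hence $\textbf{\textsf{v}}_j=\textbf{\textsf{r}}_j-\textbf{\textsf{r}}_{l_0}$ is recovered if and only if $\textbf{0}$ is the unique closest lattice point to $\triangle\textbf{\textsf{r}}_j-\triangle\textbf{\textsf{r}}_{l_0}$, which is the stated necessary-and-sufficient condition. The necessity and sufficiency then transfer verbatim: if the condition holds every $\textbf{\textsf{v}}_j$ is correct, $\bm{\Psi}_{l_0}\textbf{n}_{l_0}$ follows from (\ref{yaogao}) by Proposition \ref{pro1}, and $\bm{\Psi}_j\textbf{n}_j=\bm{\Psi}_{l_0}\textbf{n}_{l_0}-\textbf{\textsf{M}}^{-1}\textbf{\textsf{v}}_j$; if it fails, the Case A / Case B dichotomy of Lemma \ref{lema1} (mismatched remainders modulo $\bm{\Psi}_j$, versus a wrong but consistent shift) shows the determination cannot be accurate. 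For the simple sufficient condition I would replay the triangle inequality: if some closest point $\textbf{\textsf{h}}_{q_0}\neq\textbf{0}$, then
\begin{equation}
\lVert\textbf{\textsf{h}}_{q_0}\rVert\leq 2\lVert\triangle\textbf{\textsf{r}}_{q_0}-\triangle\textbf{\textsf{r}}_{l_0}\rVert\leq 4\tau<\lambda_{\mathcal{L}(\textbf{\textsf{M}}\bm{\Psi}_{l_0 q_0})},
\end{equation}
contradicting $\lVert\textbf{\textsf{h}}_{q_0}\rVert\geq\lambda_{\mathcal{L}(\textbf{\textsf{M}}\bm{\Psi}_{l_0 q_0})}$. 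Finally, left-multiplying the determined $\bm{\Psi}_i\textbf{n}_i$ by $\textbf{\textsf{M}}$ and averaging gives $\tilde{\textbf{\textsf{m}}}-\textbf{\textsf{m}}=\frac{1}{L}\sum_{i=1}^L\triangle\textbf{\textsf{r}}_i$, whence $\lVert\tilde{\textbf{\textsf{m}}}-\textbf{\textsf{m}}\rVert\leq\tau$.

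The step I expect to require the most care is keeping track of the dual role of $\textbf{\textsf{M}}$. In the CVP (\ref{xiaoxiao}) and in the robustness bounds it enters through the \emph{real} lattices $\mathcal{L}(\textbf{\textsf{M}}\bm{\Psi}_{l_0 j})=\textbf{\textsf{M}}\,\mathcal{L}(\bm{\Psi}_{l_0 j})$ and their real minimum distances, so the closest-vector search is a genuine real problem and not a disguised integer CVP; what transfers from Theorem \ref{theo1} is only the norm-based structure of the correctness analysis, which is indifferent to whether the ambient lattice is integer or real. In the reconstruction step (\ref{yaogao}), by contrast, the factor $\textbf{\textsf{M}}^{-1}$ restores integer data---each $\textbf{\textsf{M}}^{-1}\textbf{\textsf{v}}_j=\bm{\Psi}_{l_0 j}\textbf{k}_j$ is an integer vector and $\bm{\Psi}_{l_0}\textbf{n}_{l_0}$ is integer---so (\ref{yaogao}) is a bona fide integer MD-CRT on the moduli $\left\{\bm{\Psi}_i\right\}_{i=1}^L$. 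The one thing that must be checked explicitly is that the range hypothesis $\lfloor\bm{\Psi}_{l_0}^{-1}\textbf{\textsf{M}}^{-1}\textbf{\textsf{m}}\rfloor\in\mathcal{N}(\bm{\Psi}_{l_0}^{-1}\text{lcrm}(\bm{\Psi}_1,\ldots,\bm{\Psi}_L))$ is equivalent to $\bm{\Psi}_{l_0}\textbf{n}_{l_0}\in\mathcal{N}(\text{lcrm}(\bm{\Psi}_1,\ldots,\bm{\Psi}_L))$, which is precisely the uniqueness domain required by Proposition \ref{pro1}; this closes the argument.
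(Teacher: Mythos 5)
Your proposal is correct and follows essentially the same route as the paper: the paper's own justification consists of the preparatory derivation in Section \ref{sec5} (equations (\ref{cccd})--(\ref{yaogao})) followed by a direct invocation of Theorem \ref{theo1} and \textbf{Algorithm \ref{algo1}}, which is exactly the reduction you carry out, with the added merit that you spell out the details the paper calls ``straightforwardly obtained'' --- the CVP error substitution, the Case A/Case B necessity argument, the triangle-inequality bound, and the equivalence of the range hypothesis with $\bm{\Psi}_{l_0}\textbf{n}_{l_0}\in\mathcal{N}(\text{lcrm}(\bm{\Psi}_1,\ldots,\bm{\Psi}_L))$.
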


\section{Simulations}\label{sec6}
In this section, we first conduct some numerical simulations to verify the theoretical results of the robust MD-CRT in Theorem \ref{theo1} (see Sec. \ref{sec3} above), and then illustrate the performance of the robust MD-CRT in frequency estimation for a complex MD sinusoidal signal based on multiple sub-Nyquist samplers. For all experiments below, without loss of generality, we focus on the two-dimensional case, i.e., $D=2$, and the vector norm $\lVert\cdot\rVert$ involved is assumed to be the $\ell_2$ norm, i.e., $\lVert\cdot\rVert_2$.

\begin{figure}[H]
\centerline{\includegraphics[width=1\columnwidth,draft=false]{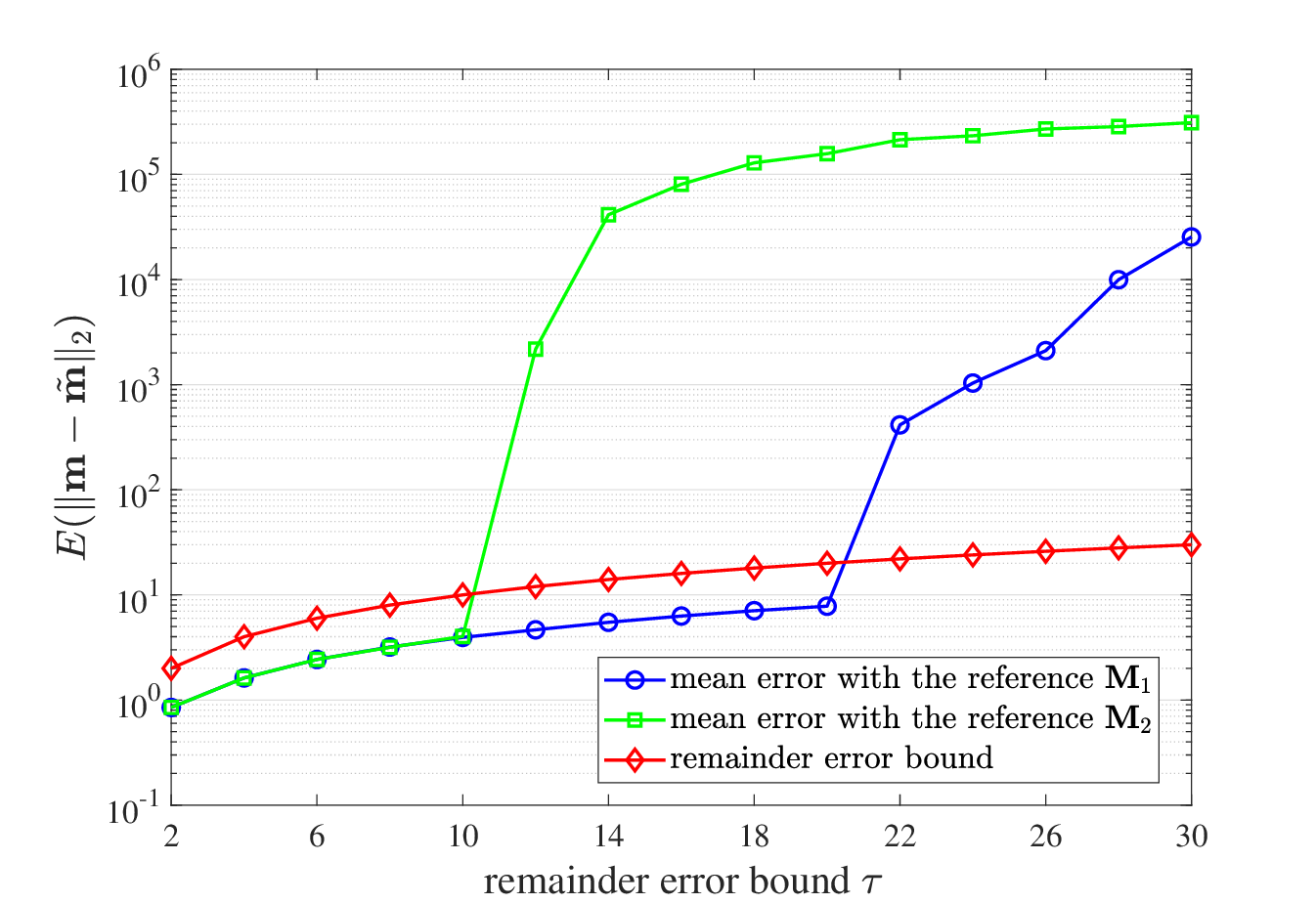}}
\caption{Mean error and theoretical error bound for the two cases using different reference moduli in \textbf{Algorithm \ref{algo1}}.}
 \label{fig1}
\end{figure}

We consider three moduli as $\textbf{M}_1=\left(
                  \begin{array}{cc}
                    1360 & 1788 \\
                    960 & 1728 \\
                  \end{array}
                \right)$, $\textbf{M}_2=\left(
                  \begin{array}{cc}
                    656 & 488 \\
                    256 & 448 \\
                  \end{array}
                \right)$, and $\textbf{M}_3=\left(
                  \begin{array}{cc}
                    1532 & 1576 \\
                    1392 & 1656 \\
                  \end{array}
                \right)$, which clearly do not satisfy the constraint (i.e., (\ref{specmod})) used in \cite{md-crt}.
We calculate an lcrm of $\left\{\textbf{M}_i\right\}_{i=1}^3$ as $\textbf{R}=\left(
                  \begin{array}{cc}
                    733248 & 540744 \\
                    655488 & 483264 \\
                  \end{array}
                \right)$, and the mini-\\mum distance of the lattice that is generated by a gcld of any pair of moduli as $\lambda_{\mathcal{L}(\textbf{M}_{12})}=85.0412$, $\lambda_{\mathcal{L}(\textbf{M}_{13})}=127.5617$, and $\lambda_{\mathcal{L}(\textbf{M}_{23})}=42.5206$. According to Theorem \ref{theo1}, we should choose $\textbf{M}_1$ as the reference moduli, i.e., $l_0=1$, and the reconstruction robustness bound is $85.0412/4=21.2603$. For comparison, we also choose $\textbf{M}_2$ as the reference moduli, and the reconstruction robustness bound is $42.5206/4=10.6302$. For these two cases, they have different reconstruction ranges.
                Let $\textbf{m}=\left(
                  \begin{array}{c}
                    515545 \\
                    460771  \\
                  \end{array}
                \right)$ be an integer vector we need to estimate, which obviously falls into the reconstruction ranges of the two cases. Therefore, with respect to each case, we investigate the remainder error bounds $\tau=0,2,4,\cdots,30$, and for each of them, we uniformly select the remainder errors $\lVert\triangle\textbf{r}_i\rVert_2\leq\tau, 1\leq i\leq 3$, and run $2000$ trails. For every trail, we utilize \textbf{Algorithm \ref{algo1}} to obtain one estimate $\tilde{\textbf{m}}$. In Fig. \ref{fig1}, we illustrate the mean error $E(\lVert\textbf{m}-\tilde{\textbf{m}}\rVert_2)$ in terms of different remainder error bounds for each of the two cases. One can see from Fig. \ref{fig1} that the performance is completely in line with the results of our proposed robust MD-CRT. That is to say, the mean error curve always lies beneath the remainder error bound curve when the remainder error bound is less than the reconstruction robustness bound, and then is about to break through the remainder error bound curve (i.e., robust reconst-\\ruction fails). Moreover, Fig. \ref{fig1} shows that choosing a proper modulus as the reference in \textbf{Algorithm \ref{algo1}} is beneficial to impr-\\oved robustness performance for the robust MD-CRT.

We next show a direct application of the robust MD-CRT to MD sinusoidal frequency estimation with multiple sub-Nyquist samplers in noise. Without loss of generality, assume that $\textbf{f}\in\mathbb{Z}^D$ is an unknown $D$-dimensional integer frequency of interest in a complex MD sinusoidal signal $x(\textbf{t})$ with noise $\omega(\textbf{t})$, i.e.,
\begin{equation}
x(\textbf{t})=e^{j2\pi\textbf{f}^{T}\textbf{t}}+\omega(\textbf{t}),\; \textbf{t}\in\mathbb{R}^D.
\end{equation}
It is known that samples of an MD signal are in general taken at vertex points of a sampling lattice (generated by a sampling matrix). We let $\left\{\textbf{M}_i^{-T}\right\}_{i=1}^L$ be $L$ different sampling matrices with corresponding sampling densities $\left\{|\text{det}(\textbf{M}_i)|\right\}_{i=1}^L$, where $\left\{\textbf{M}_i\right\}_{i=1}^L$ $\in\mathbb{Z}^{D\times D}$ are nonsingular integer matrices. We get the sampled sinusoidal signal of $x(\textbf{t})$ with the sampling matrix $\textbf{M}_i^{-T}$ as
\begin{equation}\label{ccccccc}
x_i[\textbf{n}]=e^{j2\pi \textbf{f}^T\textbf{M}_i^{-T}\textbf{n}}+\omega_i[\textbf{n}],\; \textbf{n}\in\mathbb{Z}^D.
\end{equation}
The MD discrete Fourier transform (DFT) with respect to $\textbf{M}_i^{T}$ is then implemented on $x_i[\textbf{n}], \textbf{n}\in\mathcal{N}(\textbf{M}_i^{T})$ \cite{ppvbook}, and we have
\begin{align}\label{xiouxi}
X_i[\textbf{k}]&=\sum_{\textbf{n}\in\mathcal{N}(\textbf{M}_i^T)}e^{j2\pi \textbf{f}^T\textbf{M}_i^{-T}\textbf{n}}e^{-j2\pi\textbf{k}^T\textbf{M}_i^{-T}\textbf{n}}+\Omega_i[\textbf{k}]\nonumber\\
&=\sum_{\textbf{n}\in\mathcal{N}(\textbf{M}_i^T)}e^{-j2\pi (\textbf{k}-\textbf{f})^T\textbf{M}_i^{-T}\textbf{n}}+\Omega_i[\textbf{k}]\nonumber\\
&=|\text{det}(\textbf{M}_i)|\,\delta[\textbf{k}-\textbf{r}_i]+\Omega_i[\textbf{k}]
\end{align}
for $\textbf{k}\in\mathcal{N}(\textbf{M}_i)$,
where $\textbf{r}_i$ is the remainder of $\textbf{f}$ modulo $\textbf{M}_i$, i.e., $\textbf{r}_i=\langle\textbf{f}\rangle_{\textbf{M}_i}$, $\Omega_i[\textbf{k}]$ is the MD DFT of $\omega_i[\textbf{n}]$ with respect to $\textbf{M}_i^{T}$, and the last equation holds due to the unitarity property of the MD DFT \cite{unitary}. Note that $\delta[\textbf{n}]$ stands for the MD discrete delta function, which equals $1$ if $\textbf{n}=\textbf{0}$ and $0$ otherwise. Hence, the remainder $\textbf{r}_i$ can be accurately detected as the peak in the MD DFT magnitude of $x_i[\textbf{n}]$ in (\ref{xiouxi}), when the signal-to-noise ratio (SNR, quantified as $\text{SNR}=-10\log_{10}(2\sigma^2)\;\text{dB}$ where $\omega_i[\textbf{n}]$ in (\ref{ccccccc}) is zero-mean complex white Gaussian noise with variance $2\sigma^2$) is not too low. Accordingly, $\textbf{f}$ can be accurately obtained from the detected remainders $\left\{\textbf{r}_i\right\}_{i=1}^L$ based on the MD-CRT in Proposition \ref{pro1}, if $\textbf{f}\in\mathcal{N}(\textbf{R})$, where $\textbf{R}$ is an lcrm of $\left\{\textbf{M}_i\right\}_{i=1}^L$. At this point, the Nyquist sampling density defined by $|\text{det}(\textbf{R})|$ is considerably greater than the sampling densities $\left\{|\text{det}(\textbf{M}_i)|\right\}_{i=1}^L$.
More interestingly, when the SNR is not too high, the detected remainders are likely to have errors, and thereby our proposed robust MD-CRT in Theorem \ref{theo1} offers an efficient approach for robustly estimating $\textbf{f}$ from the erroneous remainders.

\begin{figure}[H]
\centerline{\includegraphics[width=1\columnwidth,draft=false]{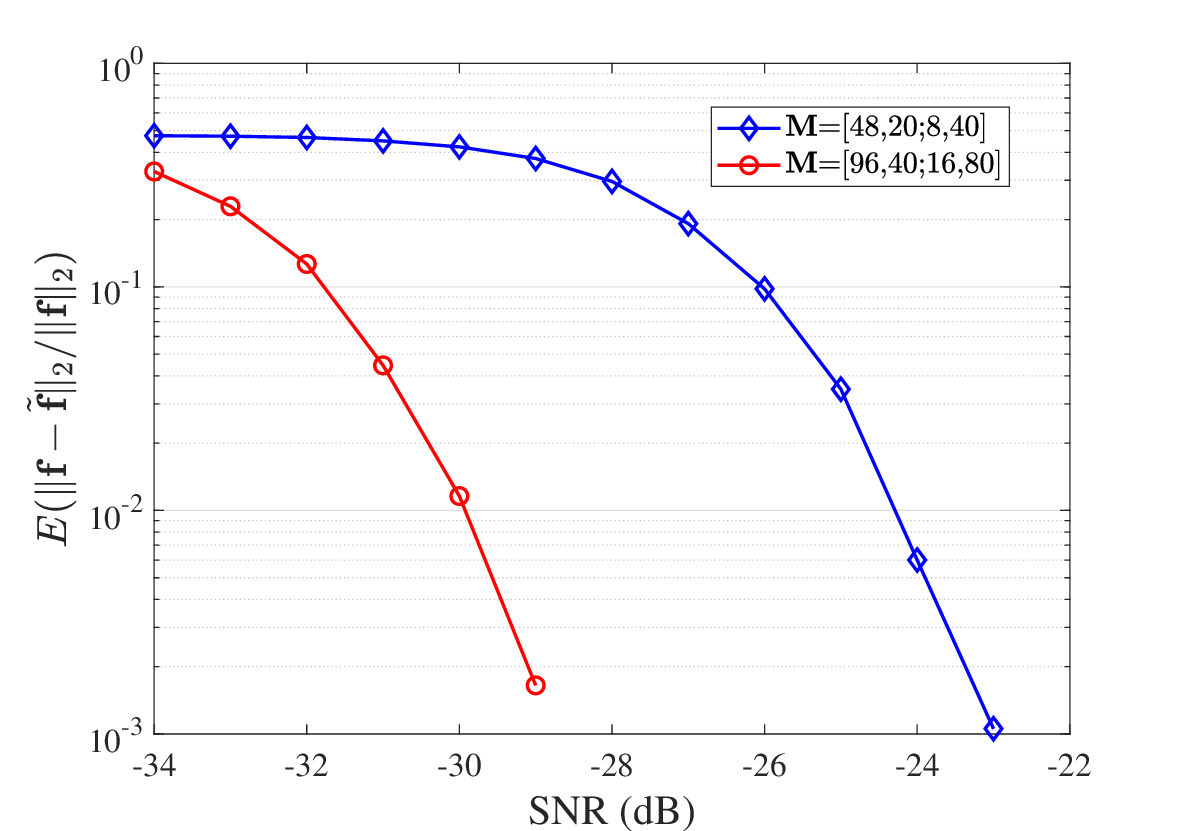}}
\caption{Mean relative error in terms of various SNR's for the two dif-\\ferent $\textbf{M}$'s.}
 \label{fig2}
\end{figure}

\begin{figure}[H]
\centerline{\includegraphics[width=1\columnwidth,draft=false]{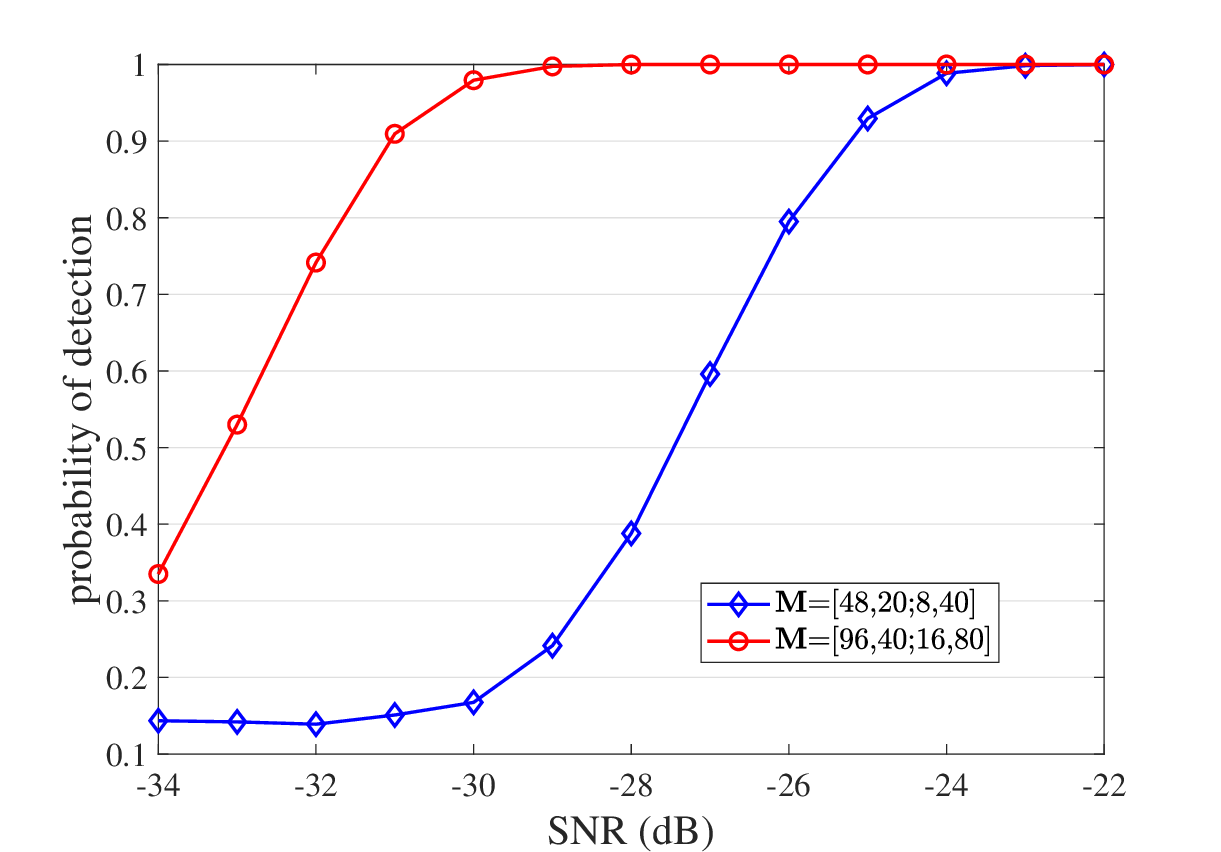}}
\caption{Probability of detection in terms of various SNR's for the two different $\textbf{M}$'s.}
 \label{fig3}
\end{figure}

To illustrate the performance of the robust MD-CRT in MD sinusoidal frequency estimation,  we consider two sampling m-\\atrices $\left\{\textbf{M}_i^{-T}\right\}_{i=1}^2$ for simplicity, where $\textbf{M}_i=\textbf{M}\bm{\Gamma}_i, i=1,2$, with
$\bm{\Gamma}_1=\left(
                  \begin{array}{cc}
                    2 & 1 \\
                    1 & 2 \\
                  \end{array}
                \right)$ and $\bm{\Gamma}_2=\left(
                  \begin{array}{cc}
                    2 & 2 \\
                    1 & 3 \\
                  \end{array}
                \right)$.
It is easily known that $\bm{\Gamma}_1$ and $\bm{\Gamma}_2$ are left coprime but not commutative. Hence, $\textbf{M}_1$ and $\textbf{M}_2$ do not satisfy the constraint (i.e., (\ref{specmod})) placed in \cite{md-crt}, and $\textbf{M}$ is their gcld. In these simulations, we investigate two cases of $\textbf{M}$, i.e., $\textbf{M}=\left(
                  \begin{array}{cc}
                    48 & 20 \\
                    8 & 40 \\
                  \end{array}
                \right)$ and $\textbf{M}=\left(
                  \begin{array}{cc}
                    96 & 40 \\
                    16 & 80 \\
                  \end{array}
                \right)$.
Based on the robust MD-CRT for the moduli $\textbf{M}_1$ and $\textbf{M}_2$, we reconstruct an MD frequency $\textbf{f}\in\mathbb{Z}^2$ from the detected remainders in the MD DFT domains of undersampled waveforms in (\ref{xiouxi}). From
Theorem \ref{theo1}, the two different $\textbf{M}$'s yield different reconstruction robustness bounds $10.6302$ and $21.2603$, respectively. We take $\textbf{f}=\left(
                  \begin{array}{c}
                    443  \\
                    388 \\
                  \end{array}
                \right)$, which is clearly within the reconstruction ranges of the two cases. In Fig. \ref{fig2}, we present the mean relative error $E(\lVert\textbf{f}-\tilde{\textbf{f}}\rVert_2/\lVert\textbf{f}\rVert_2)$ between $\textbf{f}$ and the reconstruction $\tilde{\textbf{f}}$ verse various SNR's for the two cases. Moreover, Fig. \ref{fig3} shows the probabil-\\ity of detection verse different SNR's to indicate the estimation accuracy for the two cases. In the experiments, we implement $2000$ trails for every SNR. From Fig. \ref{fig2} and Fig. \ref{fig3}, the second case with a larger reconstruction robustness bound results in better performance (i.e., lower mean relative error and higher probability of detection) than the first case with a smaller reco-\\nstruction robustness bound.

Furthermore, we compare two different sampling strategies with sampling matrices $\left\{\textbf{M}_i^{-T}\right\}_{i=1}^2$, where $\textbf{M}_1=\textbf{M}\bm{\Gamma}_1$ and $\textbf{M}_2=\textbf{M}\bm{\Gamma}_2$ are given as follows. \textit{Strategy \uppercase\expandafter{\romannumeral1}}: $\textbf{M}=\left(
                  \begin{array}{cc}
                    96 & 30 \\
                    12 & 90 \\
                  \end{array}
                \right)$, $\bm{\Gamma}_1=$ $\left(
                  \begin{array}{cc}
                    1 & 3 \\
                    3 & 1 \\
                  \end{array}
                \right)$, $\bm{\Gamma}_2=\left(
                  \begin{array}{cc}
                    5 & 2 \\
                    5 & 3 \\
                  \end{array}
                \right)$; and \textit{Strategy \uppercase\expandafter{\romannumeral2}}: $\textbf{M}=\left(
                  \begin{array}{cc}
                    10 & 32 \\
                    30 & 4 \\
                  \end{array}
                \right)$, $\bm{\Gamma}_1=\left(
                  \begin{array}{cc}
                    7 & 5 \\
                    5 & 7 \\
                  \end{array}
                \right)$, $\bm{\Gamma}_2=\left(
                  \begin{array}{cc}
                    5 & 1 \\
                    5 & 4 \\
                  \end{array}
                \right)$. It is easy to see that in each of the two strategies, the moduli do not satisfy the constraint (i.e., (\ref{specmod})) enforced in \cite{md-crt}, and $\textbf{M}=\text{gcld}(\textbf{M}_1,\textbf{M}_2)$. In addition, the two strategies possess the same Nyquist sampling density, i.e., share an identical lcrm $\textbf{R}=\left(
                  \begin{array}{cc}
                    -4782 & 5712 \\
                    -6894 & 8304 \\
                  \end{array}
                \right)$ with $|\text{det}(\textbf{R})|=331200$. According to Theorem \ref{theo1}, the two strategies have rec-\\
onstruction robustness bounds $23.7171$ and $7.9057$, respective-\\
ly. Let $\textbf{f}=\left(
                  \begin{array}{c}
                    810  \\
                    1181  \\
                  \end{array}
                \right)$, which simultaneously satisfies $\textbf{f}\in\mathcal{N}(\textbf{R})$ and falls into the reconstruction ranges of these two strategies. Fig. \ref{fig4} and Fig. \ref{fig5} illustrate the performance of the mean relative error and the probability of detection versus various SNR's for the two strategies, respectively, where $2000$ trails are implem-\\
ented for every SNR. As a consequence, \textit{Strategy \uppercase\expandafter{\romannumeral1}} achieves better performance than \textit{Strategy \uppercase\expandafter{\romannumeral2}}, while the sub-Nyquist sa-\\
mpling densities in \textit{Strategy \uppercase\expandafter{\romannumeral1}} are larger than those in \textit{Strategy \uppercase\expandafter{\romannumeral2}}, but far less than the Nyquist sampling density.

\begin{figure}[t]
\centerline{\includegraphics[width=1\columnwidth,draft=false]{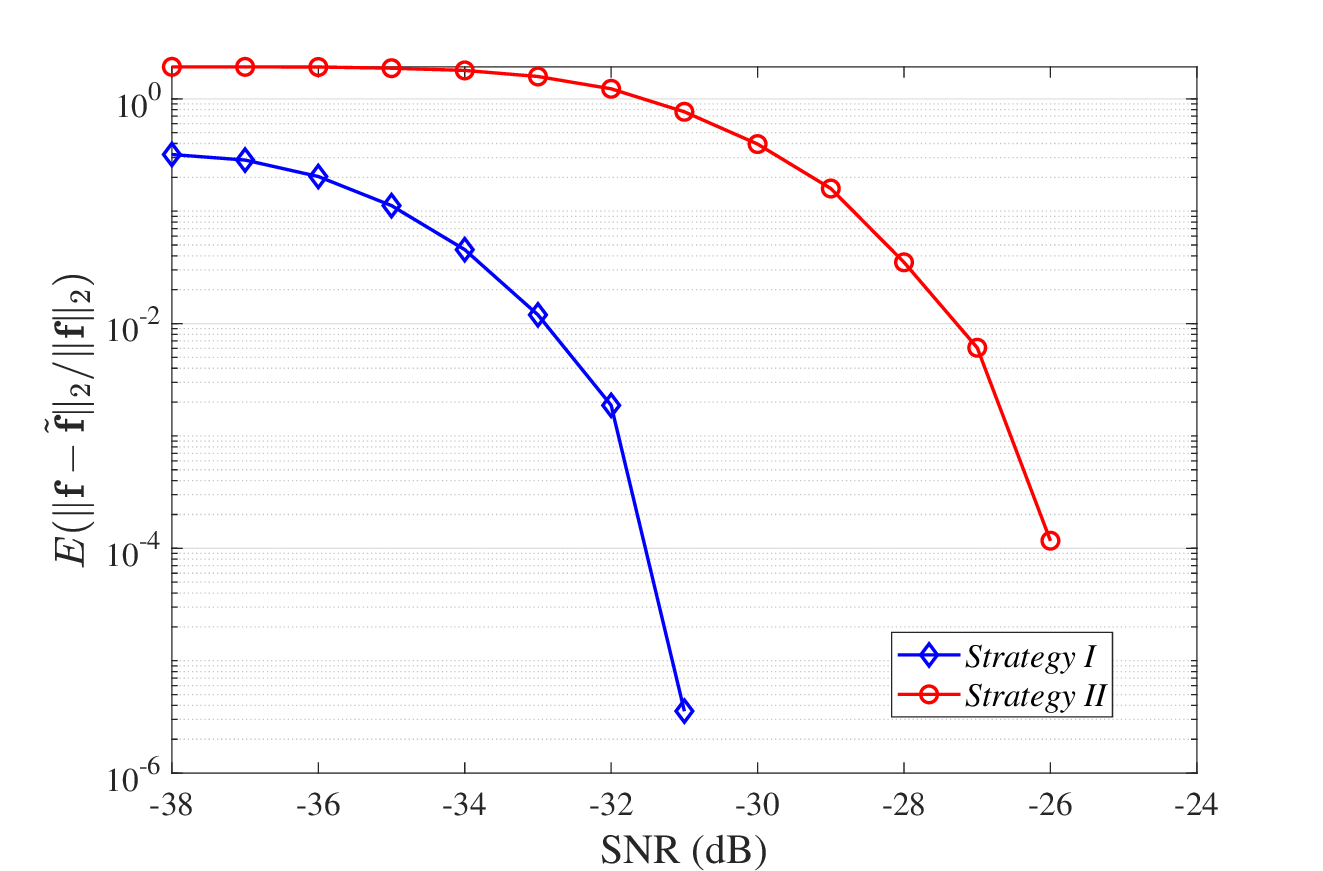}}
\caption{Mean relative error in terms of various SNR's for the two dif-\\ferent sampling strategies.}
 \label{fig4}
\end{figure}

\begin{figure}[t]
\centerline{\includegraphics[width=1\columnwidth,draft=false]{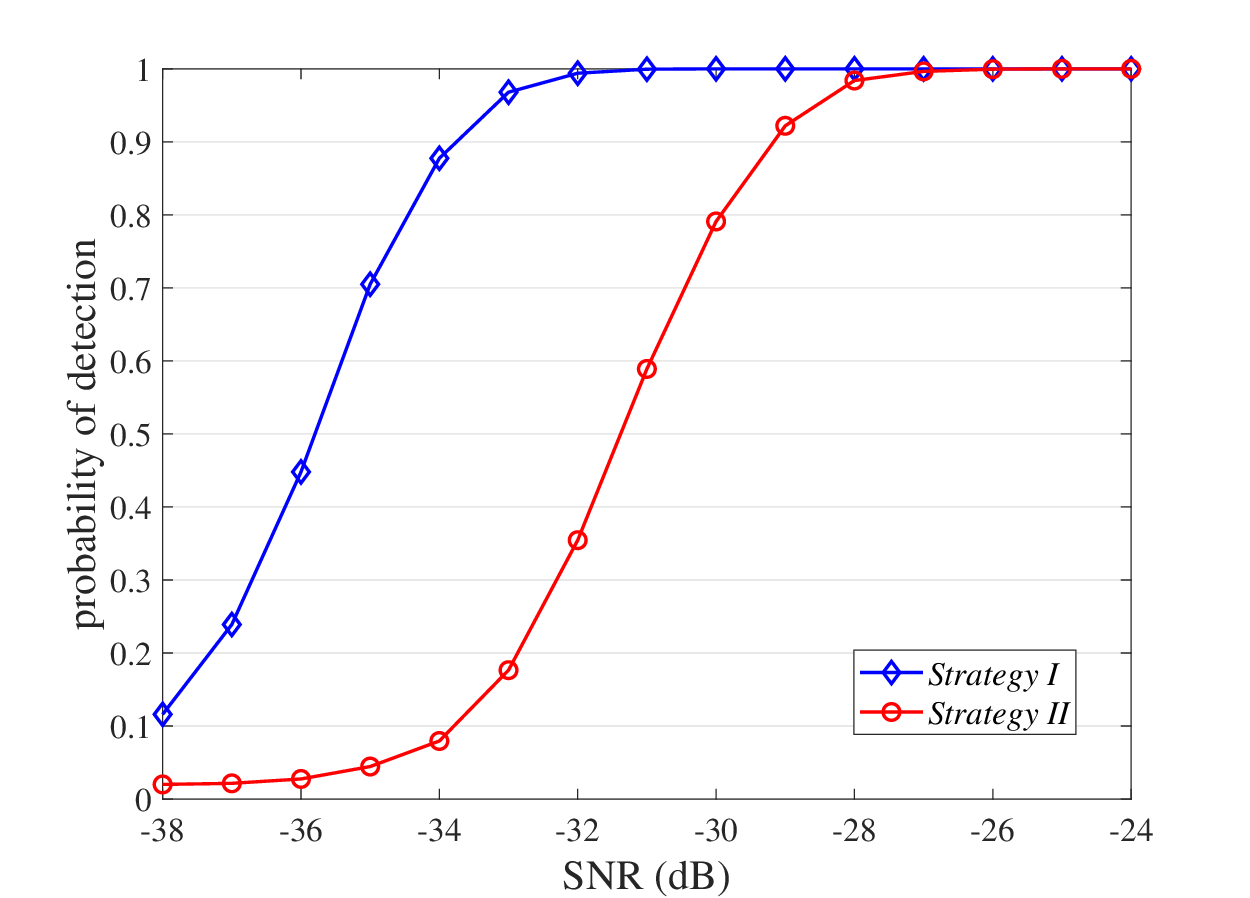}}
\caption{Probability of detection in terms of various SNR's for the two different sampling strategies.}
 \label{fig5}
\end{figure}

As a final comment, the release of the matrix commutativity and coprimeness constraint (used in \cite{md-crt}) on the moduli makes our proposed robust MD-CRT in this paper much more flexible for designing the optimal sampling matrices/lattices to achieve the best undersampling efficiency (e.g., the minimum sampling density as well as maximum robustness against noise). This is of great interest and will be studied in our future work.

\section{Conclusion}\label{sec7}
In this paper, we investigated the problem of robust reconstructions of an integer vector from the erroneous remainders. We introduced a theoretically well-founded solution to this pr-\\oblem by developing the robust MD-CRT for a general set of \\moduli that do not necessarily satisfy the strict constraint (i.e., the remaining integer matrices left-divided by a gcld of all the moduli are pairwise commutative and coprime) needed in the previous robust MD-CRT in \cite{md-crt}. Specifically, we first proved a necessary and sufficient condition on the difference between paired remainder errors, as well as a simple sufficient condition on the remainder error bound, for the robust MD-CRT for gen-\\eral
moduli, where a closed-form reconstruction algorithm was presented. We then generalized the proposed robust MD-CRT from integer vectors/matrices to real ones. We finally validated the robust MD-CRT for general moduli by conducting numerical simulations, and showed its performance in MD sinusoidal frequency estimation using multiple sub-Nyquist samplers. We believe that beyond MD sinusoidal frequency estimation from undersampled waveforms, the robust MD-CRT will have many other potential applications.

\end{document}